\renewcommand{\d}{\partial}
\newtheorem{proposition}{Proposition}
\newtheorem{lemma}{Lemma}
\newtheorem{definition}{Definition}
\newtheorem{theorem}{Theorem}
\newcommand{\gl}{M_N(\C)}
\newcommand{\I}{\mathbb{I}}
\newcommand{\bS}{\mathbb{S}}
\renewcommand{\d}{\mathrm{d}}
\newcommand{\Exp}[1]{\operatorname{e}^{#1}}
\newcommand{\g}{\mathfrak{g}}
\renewcommand{\L}{\mathcal L}
\newcommand{\W}{\mathcal W}
\newcommand{\R}{\mathbb R}
\newcommand{\Z}{\mathbb Z}
\newcommand{\C}{\mathbb C}
\newcommand{\N}{\mathbb N}
\def\res{\mathop{\rm Res}\nolimits}
\renewcommand{\L}{\mathcal{L}}
\renewcommand{\t}{\mathbf{t}}
    \newcommand{\Rmnum}[1]{\expandafter\@slowromancap\romannumeral #1@}
\def\res{\mathop{\rm Res}\nolimits}
\def\({\left(}
\def\){\right)}
\def\[{\begin{eqnarray}}
\def\]{\end{eqnarray}}
\def\d{\partial}
\def\ga{\alpha}
\newcommand{\La}{\Lambda}
\begin{document}

\title{The extended $Z_N$-Toda hierarchy}

\author{
Chuanzhong Li\dag,\  \ Jingsong He\ddag} \dedicatory {  Department of Mathematics and Ningbo Collabrative Innovation Center of Nonlinear Harzard System of Ocean and Atmosphere,\\
 Ningbo university, Ningbo 315211, China,\\
\dag lichuanzhong@nbu.edu.cn\\
\ddag hejingsong@nbu.edu.cn}
\thanks{}
\date{}

\maketitle

\begin{abstract}
The extended flow equations of a new $Z_N$-Toda hierarchy which takes values in a commutative subalgebra $Z_N$ of $gl(N,\mathbb C)$ is constructed. Meanwhile we give the Hirota bilinear equations and tau function of this new extended $Z_N$-Toda hierarchy(EZTH). Because of logarithm terms, some extended Vertex operators are constructed in generalized Hirota bilinear equations which might be useful in topological field theory and Gromov-Witten theory. Meanwhile the Darboux transformation and bi-hamiltonian structure of this hierarchy are given. From  hamiltonian tau symmetry, we give another different tau function of this hierarchy with some unknown mysterious connections with the one defined from
the point of Sato theory.
\end{abstract}

Mathematics Subject Classifications(2000).  37K05, 37K10, 37K20.\\
Keywords:   extended $Z_N$-Toda hierarchy, Hirota quadratic equation, Darboux transformation, Bi-hamiltonian structure.\\

\tableofcontents

\section {Introduction}

The KP hierarchy and Toda lattice hierarchy
  as  completely integrable systems  have many important applications in mathematics and physics including the theory of Lie algebra representation, orthogonal polynomials and  random
matrix model  \cite{Toda,Todabook,UT,witten,dubrovin}. KP and Toda systems have many kinds of reduction or extension, for example BKP, CKP hierarchy, extended Toda hierarchy (ETH)\cite{CDZ,M}, bigraded Toda hierarchy (BTH)\cite{C}-\cite{ourBlock} and so on. There are another kinds of generalization called multi-component KP \cite{kac,avanM} or multi-component Toda systems which attract more and more attention because its widely use in many fields such as the fields of multiple orthogonal
polynomials and non-intersecting Brownian motions.

The multicomponent KP hierarchy  was discussed
with application on representation theory and random matrix model in \cite{kac,avanM}. In
\cite{UT},  it was noticed that $\tau$ functions of a
$2N$-multicomponent KP provide solutions of the $N$-multicomponent
2D Toda hierarchy.
The multicomponent 2D Toda hierarchy
 was considered from the point of view of the Gauss-Borel factorization problem, the theory  of multiple matrix orthogonal polynomials, non-intersecting Brownian motions and matrix Riemann-Hilber problem \cite{manasInverse2}-\cite{manas}. In fact the multicomponent 2D Toda hierarchy in \cite{manasinverse} is a periodic reduction of bi-infinite matrix-formed two dimensional Toda hierarchy. The coefficients(or dynamic variables) of the multicomponent 2D Toda hierarchy take values in complex finite-sized
matrix. The multicomponent 2D Toda hierarchy contains matrix-formed Toda equation as the first flow equation.

Adding additional logarithm flows to the Toda lattice hierarchy,
it becomes the extended Toda hierarchy\cite{CDZ} defined on a Lax operator
\begin{equation}
L=\Lambda+u+e^v\Lambda^{-1},\ \ u,v\in \C,
\end{equation}
 which governs the Gromov-Witten invariant of $CP^1$. The Gromov-Witten potential of $CP^1$ is actually a tau function of the extended
Toda hierarchy, i.e. the Gromov-Witten potential $\tau$ of $CP^1$
makes the  following Hirota quadratic equations\cite{M}  of the ETH
\begin{equation}\label{milanov}  \frac{d\lambda}{\lambda}
\(\Gamma^{\delta\#}\otimes \Gamma^{\delta}\) \( \Gamma^{\alpha}\otimes
\Gamma^{-\alpha}- \Gamma^{-\alpha}\otimes
\Gamma^{\alpha}\right)(\tau
\otimes \tau )
\end{equation}
regular in $\lambda$ computed at $q_{0}-q'_{0}=l\epsilon$
 for each  $l\in \Z$.
 The extended bigraded Toda
hierarchy(EBTH) is the extension of the bigraded Toda
hierarchy (BTH) which includes additional logarithm flows\cite{C,KodamaCMP}. This paper will tell us that the Hirota quadratic equations\eqref{milanov} can be derived as a reduction on Lie algebra from the Hirota bilinear equation of the extended $Z_N$-Toda hierarchy. Therefore the application of our Hirota bilinear equation of the extended $Z_N$-Toda hierarchy in Gromov-Witten theory becomes a great motivation of our study.  The Hirota bilinear equation of EBTH was equivalently constructed in our early paper\cite{ourJMP} and a very recent paper \cite{leurhirota}, because of the equivalence of $t_{1,N}$ flow and $t_{0,N}$ flow of EBTH. Meanwhile it was proved to govern Gromov-Witten invariant of the total
  descendent potential of $\mathbb{P}^1$ orbifolds \cite{leurhirota}.
A nature question is what about the corresponding extended multi-component  Toda hierarchy( as a matrix-formed generalization of extended Toda hierarchy\cite{CDZ}) and extended multicomponent bigraded Toda hierarchy.  There is a class of orbifolds which should be governed by some
logarithmic hierarchies. That is why we think this kind of new logarithmic hierarchy might be useful in
Gromov-Witten invariants theory governed by these two new hierarchies. With this motivation, our paper\cite{EMTH} will be devoted to construct a kind of Hirota quadratic equation taking values in a differential matrix
algebra set. This kind of Hirota bilinear equation might be useful in Gromov-Witten theory. In \cite{zuo}, a new hierarchy called as $Z_m$-KP hierarchy which take values in a maximal commutative subalgebra of $gl(m,\C)$ was constructed, meanwhile the relation between Frobenius manifold and dispersionless reduced $Z_m$-KP hierarchy was discussed. This inspires us to consider the Hirota quadratic equation of the commutative version of extended multi-component Toda hierarchy which might be useful in Frobenius manifold theory in this paper.

This paper is arranged as follows. In the next section we recall factorization problem and construct the logarithm matrix operators using which we define the extended flow of the multicomponent $Z_N$-Toda hierarchy. In Section 3,
we will give the Lax equations of the extended $Z_N$-Toda hierarchy (EZTH), meanwhile the multicomponent $Z_N$-Toda equations and the extended equations are introduced in this hierarchy. By Sato equations, Hirota bilinear equations of the EZTH are proved in
Section 4. The tau function of the EZTH will be defined in Section 5 which lead to the formalism of the generalized matrix Vertex operators and Hirota quadratic equations in Section 6. In section 7, multi-fold transformations of the EZTH will be constructed using determinant technique in \cite{rogueHMB}. To prove the integrability of this new hierarchy, Bi-hamiltonian structure and tau symmetry of the EZTH are constructed in Section 8. Section 9 will be devoted to a short conclusions and discussions.

\section{Factorization and Logarithm operators}

 Suppose that $\tilde G$ is a group which contains linear invertible elements of complex $N\times N$
complex matrices and its Lie algebra  $\tilde \g$  denotes the associative algebra  of complex $N\times N$
complex matrices $M_N(\C)$.
Now we will consider the linear space of function
$g:\R\rightarrow M_N(\C)$ with the shift operator $\Lambda$ acting on these functions as
$(\Lambda g)(x):=g(x+\epsilon)$. A Left
multiplication by  $X:\R\to \gl$ is as $X\Lambda^j$, $(
X\Lambda^j)(g)(x):=X(x)\circ g(x+j\epsilon)$ with defining the product
$(X(x)\Lambda^i)\circ(Y(x)\Lambda^j):=X(x)Y(x+i\epsilon)\Lambda^{i+j}.$
Then the set $\g$ of Laurent
series in $\Lambda$ as an associative algebra is a
Lie algebra under the standard commutator.

This Lie algebra has the following important splitting
\begin{gather}\label{splitting}
\g=\g_+\oplus\g_-,
\end{gather}
where
\begin{align*}
  \g_+&=\Big\{\sum_{j\geq 0}X_j(x)\Lambda^j,\quad X_j(x)\in\gl\Big\},&
  \g_-&=\Big\{\sum_{j< 0}X_j(x)\Lambda^j,\quad X_j(x)\in\gl\Big\}.
\end{align*}

The splitting
\eqref{splitting} leads us to consider the following factorization of
$g\in G$
\begin{gather}\label{fac1}
g=g_-^{-1}\circ g_+, \quad g_\pm\in G_\pm
\end{gather}
where $G_\pm$ have $\g_\pm$ as their Lie algebras. $G_+$
is the set of invertible linear operators  of the
form $\sum_{j\geq 0}g_j(x)\Lambda^j$; while $G_-$ is the set of
invertible linear operators of the form
$1+\sum_{j<0}g_j(x)\Lambda^j$. This algebra has a maximal commutative subalgebra $Z_N=\C[\Gamma]/(\Gamma^N)$ and $\Gamma=(\delta_{i,j+1})_{ij}\in gl(N,\C).$
Denote $Z_N(\Lambda):=\g_c$, then
we have the following splitting
\begin{gather}\label{splittingc}
\g_c=\g_{c+}\oplus\g_{c-},
\end{gather}
where
\begin{align*}
  \g_{c+}&=\Big\{\sum_{j\geq 0}X_j(x)\Lambda^j,\quad X_j(x)\in Z_N\Big\},&
  \g_{c-}&=\Big\{\sum_{j< 0}X_j(x)\Lambda^j,\quad X_j(x)\in Z_N\Big\}.
\end{align*}
The splitting \eqref{splittingc} leads us to consider the following factorization of
$g_c\in G_c$
\begin{gather}\label{fac1}
g_c=g_{c-}^{-1}\circ g_{c+}, \quad g_{c\pm}\in G_{c\pm},
\end{gather}
where $G_{c\pm}$ have $\g_{c\pm}$ as their Lie algebras. $G_{c+}$
is the set of invertible linear operators  of the
form $\sum_{j\geq 0}g_j(x)\Lambda^j$; while $G_{c-}$ is the set of
invertible linear operators of the form
$1+\sum_{j<0}g_j(x)\Lambda^j$.

 Now we
introduce  the following free operators $ W_0,\bar  W_0\in G_c$
\begin{align}
 \label{def:E}  W_0&:=\Exp{\sum_{j=0}^\infty
 t_{j}\frac{\Lambda^j}{\epsilon j!}+ s_{j}\frac{\Lambda^j}{\epsilon j!}(\epsilon \partial-c_j)},\ \ \partial=\frac{\partial}{\partial  x}, \\
\label{def:barE}   \bar W_0&:=\Exp{\sum_{j=0}^\infty
   t_{j}\frac{\Lambda^{-j}}{\epsilon j!}+ s_{j}\frac{\Lambda^{-j}}{\epsilon j!}(\epsilon \partial-c_j)}, \ \ c_j=\sum_{i=1}^j\frac 1i,
\end{align}
where $t_{j}, s_{j} \in \C$
will play the role of continuous times.

 We   define the dressing operators $W,\bar W$ as follows
\begin{align}
\label{def:baker}W&:=S\circ W_0,\ \  \bar W:=\bar S\circ \bar  W_0,\quad S\in G_{c-},\ \bar S\in G_{c+}.
\end{align}
Given an element $g\in G_c$ and denote $t=(t_{j}), s=(s_{j}); j\mathbb\in \N$, one can consider the factorization problem  in $G_c$ similarly as\cite{manasinverse}
\begin{gather}
  \label{facW}
  W\circ g=\bar W,
\end{gather}
i.e.
 the factorization problem
\begin{gather}
  \label{factorization}
  S(t,s)\circ W_0\circ g=\bar S(t,s)\circ\bar W_0.
\end{gather}
Observe that  $S,\bar S$ have expansions of the form
\begin{gather}
\label{expansion-S}
\begin{aligned}
S&=\I_N+\omega_1(x)\Lambda^{-1}+\omega_2(x)\Lambda^{-2}+\cdots\in G_{c-},\\
\bar S&=\bar\omega_0(x)+\bar\omega_1(x)\Lambda+\bar\omega_2(x)\Lambda^{2}+\cdots\in
G_{c+}.
\end{aligned}
\end{gather}
Also we define the symbols of $S,\bar S$ as  $\bS,\bar \bS$
\begin{gather}
\begin{aligned}
\bS&=\I_N+\omega_1(x)\lambda^{-1}+\omega_2(x)\lambda^{-2}+\cdots,\\
\bar \bS&=\bar\omega_0(x)+\bar\omega_1(x)\lambda+\bar\omega_2(x)\lambda^{2}+\cdots.
\end{aligned}
\end{gather}

The inverse operators $S^{-1},\bar S^{-1}$ of operators $S,\bar S$ have expansions of the form
\begin{gather}
\begin{aligned}
S^{-1}&=\I_N+\omega'_1(x)\Lambda^{-1}+\omega'_2(x)\Lambda^{-2}+\cdots\in G_{c-},\\
\bar S^{-1}&=\bar\omega'_0(x)+\bar\omega'_1(x)\Lambda+\bar\omega'_2(x)\Lambda^{2}+\cdots\in
G_{c+}.
\end{aligned}
\end{gather}
Also we define the symbols of $S^{-1},\bar S^{-1}$  as  $\bS^{-1},\bar \bS^{-1}$
\begin{gather}
\begin{aligned}
\bS^{-1}&=\I_N+\omega'_1(x)\lambda^{-1}+\omega'_2(x)\lambda^{-2}+\cdots,\\
\bar \bS^{-1}&=\bar\omega'_0(x)+\bar\omega'_1(x)\lambda+\bar\omega'_2(x)\lambda^{2}+\cdots.
\end{aligned}
\end{gather}

 The Lax  operators $\L\in G_c$
 are defined by
\begin{align}
\label{Lax}  \L&:=W\circ\Lambda\circ W^{-1}=\bar W\circ\Lambda^{-1}\circ \bar W^{-1},
\end{align}
and
have the following expansions
\begin{gather}\label{lax expansion}
\begin{aligned}
 \L&=\Lambda+u_1(x)+u_2(x)\Lambda^{-1}.
\end{aligned}
\end{gather}
 In fact the Lax  operators $\L\in G_c$
can also be equivalently defined by
\begin{align}
\label{Lax}  \L&:=S\circ\Lambda\circ S^{-1}=\bar S\circ\Lambda^{-1}\circ \bar S^{-1}.
\end{align}
These definitions are continuous interpolated version of the multi-component commutative Toda hierarchy, i.e. a continuous spatial parameter $x$ was brought into this hierarchy. Under this meaning, the continuous flow $\frac{\partial}{\partial x}$ is missing. To make these flows complete, we define the following logarithm matrix
\begin{align}
\log_+\L&=W\circ\epsilon \partial\circ W^{-1}=S\circ\epsilon \partial\circ S^{-1},\\
\log_-\L&=-\bar W\circ\epsilon \partial\circ \bar W^{-1}=-\bar S\circ\epsilon \partial\circ \bar S^{-1},
\end{align}
where $\d$ is the derivative about spatial variable $x$.

Combining these above logarithm operators together can derive following important logarithm matrix
\begin{align}
\label{Log} \log \L:&=\frac12(\log_+\L+\log_-\L)=\frac12(S\circ\epsilon \partial\circ S^{-1}-\bar S\circ\epsilon \partial\circ \bar S^{-1}):=\sum_{i=-\infty}^{+\infty}W_i\Lambda^i\in G_c,
\end{align}
which will generate a series of flow equations which contain the spatial flow in later defined Lax equations.

\section{ Lax equations of EZTH}

In this section we will use the factorization problem \eqref{facW} to derive  Lax equations.
Let us first introduce some convenient notations.
\begin{definition}Matrix operators $B_{j},D_{j}$ are defined as follows
\begin{align}\label{satoS}
\begin{aligned}
B_{j}&:=\frac{\L^{j+1}}{(j+1)!},\ \
D_{j}:=\frac{2\L^j}{j!}(\log \L-c_j),\ \  c_j=\sum_{i=1}^j\frac 1i,\ j\geq 0.
\end{aligned}
\end{align}
\end{definition}

Now we give the definition of the extended $Z_N$-Toda hierarchy(EZTH).
\begin{definition}The extended $Z_N$-Toda hierarchy is a hierarchy in which the dressing operators $S,\bar S$ satisfy the following Sato equations
\begin{align}
\label{satoSt} \epsilon\partial_{t_{j}}S&=-(B_{j})_-S,& \epsilon\partial_{t_{j}}\bar S&=(B_{j})_+\bar S,  \\
\label{satoSs}\epsilon\partial_{ s_{j}}S&=-(D_{j})_- S,& \epsilon\partial_{s_{j}}\bar S&=(D_{j})_+\bar S.\end{align}
\end{definition}
Then one can easily get the following proposition about $W,\bar W.$

\begin{proposition}The dressing operators $W,\bar W$ are subject to the following Sato equations
\begin{align}
\label{Wjk} \epsilon\partial_{t_{j}}W&=(B_{j})_+ W,& \epsilon\partial_{t_{j}}\bar W&=(B_{j})_+\bar W,  \\
\epsilon\partial_{s_{j}}W&=(\frac{\L^j}{j!}(\log_+ \L-c_j) -(D_{j})_-) W,& \epsilon\partial_{s_{j}}\bar W&=(-\frac{\L^j}{j!}(\log_- \L-c_j)+(D_{j})_+)\bar W.  \end{align}
\end{proposition}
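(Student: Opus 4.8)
The plan is to differentiate the defining relations $W=S\circ W_{0}$ and $\bar W=\bar S\circ\bar W_{0}$ with respect to each time, feed in the Sato equations \eqref{satoSt}--\eqref{satoSs} for $S,\bar S$, and convert the contribution coming from the free operators $W_{0},\bar W_{0}$ into operators built from $\L$. Concretely, for the $t_{j}$ flow I would write
\begin{align*}
\epsilon\partial_{t_{j}}W=\epsilon(\partial_{t_{j}}S)\circ W_{0}+S\circ(\epsilon\partial_{t_{j}}W_{0}),
\end{align*}
so that the first summand is immediately $-(B_{j})_{-}W$ by \eqref{satoSt}. Everything then hinges on evaluating $\epsilon\partial_{t_{j}}W_{0}$ and moving it to the left through $S$.

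The second step is the free-operator computation. Since $W_{0}$ is the exponential of a sum of terms lying in the commutative algebra generated by $\Lambda$ and $\partial$ (these commute because the shift commutes with $\partial_{x}$), differentiation acts only on the exponent and $\epsilon\partial_{t_{j}}W_{0}$ equals the corresponding power of $\Lambda$ times $W_{0}$. Because $W_{0}$ commutes with $\Lambda$, the Lax operator may be written as $\L=W\circ\Lambda\circ W^{-1}=S\circ\Lambda\circ S^{-1}$, whence conjugation through $S$ turns this power of $\Lambda$ into $B_{j}$, giving $S\circ(\epsilon\partial_{t_{j}}W_{0})=B_{j}\,W$. Adding the two summands yields $\epsilon\partial_{t_{j}}W=(-(B_{j})_{-}+B_{j})W=(B_{j})_{+}W$, which is the claim. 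The equation for $\bar W$ is obtained the same way, now using the alternative dressing $\L=\bar W\circ\Lambda^{-1}\circ\bar W^{-1}=\bar S\circ\Lambda^{-1}\circ\bar S^{-1}$ to conjugate the negative power of $\Lambda$ emerging from $\bar W_{0}$ back into $B_{j}$, together with \eqref{satoSt} for $\bar S$.

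For the logarithmic $s_{j}$ flows the same scheme applies, but the $s_{j}$ generator in $W_{0}$ carries the extra factor $(\epsilon\partial-c_{j})$. After conjugating the $\Lambda$-power through $S$ as above, the surviving $\epsilon\partial$ must be absorbed using the logarithm operators: from $\log_{+}\L=S\circ\epsilon\partial\circ S^{-1}=W\circ\epsilon\partial\circ W^{-1}$ one has $W\circ\epsilon\partial=(\log_{+}\L)W$, so that $S\circ(\epsilon\partial_{s_{j}}W_{0})=\tfrac{\L^{j}}{j!}(\log_{+}\L-c_{j})W$; combined with the $-(D_{j})_{-}W$ coming from \eqref{satoSs} this reproduces the stated $s_{j}$-equation for $W$. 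For $\bar W$ one instead uses $\log_{-}\L=-\bar W\circ\epsilon\partial\circ\bar W^{-1}$, which is exactly where the sign in front of $\log_{-}\L$ and the accompanying $c_{j}$-shift in the $\bar W$ equation originate.

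I expect the main obstacle to be the bookkeeping of these logarithmic terms rather than any conceptual difficulty: one must keep precise track of the factorials and of the constants $c_{j}$ produced by the factor $(\epsilon\partial-c_{j})$, and above all distinguish the two logarithm operators $\log_{+}\L$ and $\log_{-}\L$ together with the opposite signs with which $\epsilon\partial$ is dressed by $S$ versus $\bar S$. A secondary point worth checking is the mutual consistency of the two presentations $\L=S\circ\Lambda\circ S^{-1}=\bar S\circ\Lambda^{-1}\circ\bar S^{-1}$ under the flows, which is what guarantees that the $W$- and $\bar W$-equations are compatible; this becomes immediate once the above conjugation identities are in hand.
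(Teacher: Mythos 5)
Your overall strategy --- differentiate $W=S\circ W_0$ and $\bar W=\bar S\circ\bar W_0$, insert the Sato equations \eqref{satoSt}--\eqref{satoSs} for $S,\bar S$, and conjugate the free-operator contribution through $S$ or $\bar S$ via $\L=S\circ\Lambda\circ S^{-1}=\bar S\circ\Lambda^{-1}\circ\bar S^{-1}$, $\log_+\L=S\circ\epsilon\partial\circ S^{-1}$, $\log_-\L=-\bar S\circ\epsilon\partial\circ\bar S^{-1}$ --- is exactly the computation the paper relies on: it omits the proof (``one can easily get''), but performs precisely these product-rule identities inside the proof of its Hirota bilinear Proposition \ref{HBEoper}. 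On the $W$ side your cancellation $-(B_j)_-+B_j=(B_j)_+$ is the correct mechanism, modulo a typo in the paper itself: the literal exponent $t_j\Lambda^j/(\epsilon j!)$ in \eqref{def:E} conjugates to $\L^j/j!$, not to $B_j=\L^{j+1}/(j+1)!$; the intended exponent is $t_j\Lambda^{j+1}/(\epsilon (j+1)!)$, as the vertex operators $\Gamma^{\pm a}$ confirm. You flagged the factorial bookkeeping as a risk but did not resolve it.

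The genuine gap is your $t_j$-equation for $\bar W$. You say it is ``obtained the same way,'' by conjugating the negative power of $\Lambda$ coming from $\partial_{t_j}\bar W_0$ back into $B_j$. But on this side there is nothing for that term to cancel against: \eqref{satoSt} gives $\epsilon(\partial_{t_j}\bar S)\bar S^{-1}=(B_j)_+$ (already a plus-projection, not $-(B_j)_-$), so the route you describe yields
\begin{equation*}
\epsilon\partial_{t_j}\bar W=\bigl((B_j)_+ + B_j\bigr)\bar W\neq (B_j)_+\bar W .
\end{equation*}
The stated equation holds only if $\partial_{t_j}\bar W_0=0$, i.e.\ the $t$-part of \eqref{def:barE} must be absent; this is exactly what the paper uses implicitly in its bilinear proof, where it writes $\epsilon\partial_{t_j}\bar W=(\partial_{t_j}\bar S)\bar S^{-1}\bar W$ with no free-operator contribution. (One can also see that a $t$-dependent $\bar W_0$ of this form is incompatible with the factorization problem \eqref{facW}: the $B_j$ terms would cancel on both sides and force $\epsilon(\partial_{t_j}S)S^{-1}=\epsilon(\partial_{t_j}\bar S)\bar S^{-1}$, whose two sides lie in complementary parts of the splitting, killing the flow.) A smaller instance of the same problem occurs in your $s_j$-equation for $\bar W$: conjugating the literal factor $(\epsilon\partial-c_j)$ of \eqref{def:barE} gives $-\frac{\L^j}{j!}(\log_-\L+c_j)$, not the stated $-\frac{\L^j}{j!}(\log_-\L-c_j)$, so the sign of $c_j$ there must be reversed. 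The clean repair is to reverse-engineer the free operators from the Sato equations --- or to verify, as the bilinear proof actually requires, that the $W$- and $\bar W$-multipliers for each $s_j$ coincide, which follows from $\log\L=\frac12(\log_+\L+\log_-\L)$ and $D_j=\frac{2\L^j}{j!}(\log\L-c_j)$ --- rather than to take \eqref{def:E}--\eqref{def:barE} at face value.
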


 From the previous proposition we derive the following  Lax equations for the Lax operators.
\begin{proposition}\label{Lax}
 The  Lax equations of the EZTH are as follows
   \begin{align}
\label{laxtjk}
  \epsilon\partial_{t_{j}} \L&= [(B_{j})_+,\L],&
  \epsilon\partial_{s_{j}} \L&= [(D_{j})_+,\L],\
  \epsilon\partial_{ t_{j}} \log \L= [(B_{j})_+ ,\log \L],&
  \end{align}
   \begin{align}\epsilon(\log \L)_{ s_{j}}=[ -(D_{j})_-,\log_+ \L ]+
[(D_{j})_+ ,\log_- \L ].
\end{align}
\end{proposition}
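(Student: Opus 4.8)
The plan is to derive all four equations by differentiating the two dressing representations of $\L$ and of the logarithm operators with respect to the flow parameters, substituting the Sato equations \eqref{satoSt}--\eqref{satoSs}, and simplifying with the commutator identities satisfied by $B_j$, $D_j$ and the spatial operator. Throughout I would use the standard identity $\epsilon\partial_{t_j}S^{-1}=-S^{-1}(\epsilon\partial_{t_j}S)S^{-1}$ (and the analogue for $\bar S$, $s_j$), which follows from differentiating $S\circ S^{-1}=\I_N$.

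First I would treat the two equations for $\L$ itself. Writing $\L=\bar S\circ\Lambda^{-1}\circ\bar S^{-1}$ and differentiating, the Sato equation $\epsilon\partial_{t_j}\bar S=(B_j)_+\bar S$ gives $\epsilon\partial_{t_j}\bar S^{-1}=-\bar S^{-1}(B_j)_+$ and hence directly $\epsilon\partial_{t_j}\L=(B_j)_+\L-\L(B_j)_+=[(B_j)_+,\L]$; the identical manipulation with $\epsilon\partial_{s_j}\bar S=(D_j)_+\bar S$ yields $\epsilon\partial_{s_j}\L=[(D_j)_+,\L]$. As a consistency check one should verify that differentiating the other representation $\L=S\circ\Lambda\circ S^{-1}$ with $\epsilon\partial_{t_j}S=-(B_j)_-S$ produces the same answer: this works precisely because $B_j=\L^{j+1}/(j+1)!$ commutes with $\L$, so $[B_j,\L]=0$ forces $[(B_j)_+,\L]=-[(B_j)_-,\L]$, and likewise for $D_j$.

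Next, for the logarithm flows I would differentiate the operator forms $\log_+\L=S\circ\epsilon\partial\circ S^{-1}$ and $\log_-\L=-\bar S\circ\epsilon\partial\circ\bar S^{-1}$ separately, treating the spatial operator $\epsilon\partial$ as independent of the times and keeping the operator compositions in order. With $\epsilon\partial_{t_j}S=-(B_j)_-S$ one obtains $\epsilon\partial_{t_j}\log_+\L=[\log_+\L,(B_j)_-]=-[(B_j)_-,\log_+\L]$, while $\epsilon\partial_{t_j}\bar S=(B_j)_+\bar S$ gives $\epsilon\partial_{t_j}\log_-\L=[(B_j)_+,\log_-\L]$. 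The decisive point is the identity $[\Lambda,\epsilon\partial]=0$ (both send $f(x)\mapsto\epsilon f'(x+\epsilon)$), which under dressing yields $[\L,\log_+\L]=[\L,\log_-\L]=0$ and hence $[B_j,\log_\pm\L]=0$. This converts $-[(B_j)_-,\log_+\L]$ into $[(B_j)_+,\log_+\L]$, so that halving the sum of the two contributions (via $\log\L=\tfrac12(\log_+\L+\log_-\L)$) collapses to the single commutator $[(B_j)_+,\log\L]$ asserted for the $t_j$ flow.

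Finally, the $s_j$ logarithm equation follows the same route, producing the contribution $-[(D_j)_-,\log_+\L]$ from the $S$-equation and $[(D_j)_+,\log_-\L]$ from the $\bar S$-equation, whose combination assembles into the stated right-hand side. The reason this equation does \emph{not} collapse to one commutator, in contrast with the $t_j$ case, is that $D_j=\frac{2\L^j}{j!}(\log\L-c_j)$ contains $\log\L=\tfrac12(\log_+\L+\log_-\L)$, and $[\log_+\L,\log_-\L]$ need not vanish, so $[D_j,\log_\pm\L]\neq0$ in general; hence the two-term asymmetric form is unavoidable. I expect the main obstacle to be exactly this logarithm bookkeeping: confirming that $\epsilon\partial$ commutes with each flow derivative, ordering the operator compositions correctly when substituting the Sato equations, and carefully tracking the factors of $\tfrac12$ arising from $\log\L=\tfrac12(\log_+\L+\log_-\L)$ so that the separate pieces reassemble into the displayed equation.
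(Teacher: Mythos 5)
Your overall strategy---differentiate the dressing formulas $\L=S\circ\Lambda\circ S^{-1}=\bar S\circ\Lambda^{-1}\circ\bar S^{-1}$ and $\log_\pm\L$, insert the Sato equations \eqref{satoSt}--\eqref{satoSs}, and simplify using $[\Lambda,\epsilon\partial]=0$, $[B_j,\L]=0$ and $[\L,\log_\pm\L]=0$---is exactly the derivation the paper intends (it gives no details beyond ``from the previous proposition''), and your treatment of the first three equations is correct, including the consistency check between the $S$- and $\bar S$-representations and the conversion of $-[(B_j)_-,\log_+\L]$ into $[(B_j)_+,\log_+\L]$.

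The gap is in the final assembly of the $s_j$ equation for $\log\L$. Your own intermediate results, $\epsilon\partial_{s_j}\log_+\L=-[(D_j)_-,\log_+\L]$ and $\epsilon\partial_{s_j}\log_-\L=[(D_j)_+,\log_-\L]$, combined with $\log\L=\tfrac12(\log_+\L+\log_-\L)$, give
\begin{equation*}
\epsilon\,\partial_{s_j}\log\L=\tfrac12\Bigl([-(D_j)_-,\log_+\L]+[(D_j)_+,\log_-\L]\Bigr),
\end{equation*}
which is \emph{half} of the displayed right-hand side. Unlike the $t_j$ case, the $\tfrac12$ cannot be absorbed, because the two coefficient operators $-(D_j)_-$ and $(D_j)_+$ differ and the two brackets are not equal in general---precisely your own observation that $[\log_+\L,\log_-\L]\neq0$. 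So the claim that the two contributions ``assemble into the stated right-hand side'' is unsubstantiated: having correctly flagged the $\tfrac12$ bookkeeping as the main danger, you then waved it through. A concrete cross-check at $j=0$ exposes the factor. There $D_0=2\log\L=\log_+\L+\log_-\L$; writing $\log_+\L=\epsilon\partial+A$ with $A$ strictly negative in $\Lambda$ and $\log_-\L=-\epsilon\partial+B$ with $B$ nonnegative, one gets $(D_0)_-=A$, $(D_0)_+=B$, hence $[-(D_0)_-,\log_+\L]+[(D_0)_+,\log_-\L]=[\epsilon\partial,A+B]=2\epsilon(\log\L)_x$, whereas $\epsilon\partial_{s_0}\log\L=\epsilon(\log\L)_x$, since the paper itself identifies the $s_0$ flow with the spatial flow $\partial_x$. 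Thus what your method actually proves is the proposition with an overall factor $\tfrac12$ on the last right-hand side (equivalently, with $\tfrac12\log_\pm\L$ inside the brackets); a correct write-up must either state that corrected identity or explicitly reconcile the normalization (e.g.\ by modifying the Sato equations to carry $2(D_j)_\pm$), rather than assert that the displayed equation follows as printed.
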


To see this kind of hierarchy more clearly, the  $Z_N$-Toda equations as the $t_{0}$ flow equations  will be given in next subsection.
\subsection{The extended $Z_N$-Toda equations}
 As a consequence of the factorization problem \eqref{facW} and  Sato equations, after taking into account that   $S\in G_{c-}$ and $\bar S\in G_{c+}$, the $t_0$ flow of $\L$ in the form of $\L=\Lambda+U+V\Lambda^{-1}$ is as
\begin{gather}\label{exp-omega}
\begin{aligned}
  \epsilon\partial_{t_{0}} \L&= [\Lambda+U,V\Lambda^{-1}],
  \end{aligned}
\end{gather}
which lead to matrix-Toda equation
\[\epsilon\partial_{t_{0}} U&=& V(x+\epsilon)-V(x),\\ \label{toda}
\epsilon\partial_{t_{0}} V&=& U(x)V(x)-V(x)U(x-\epsilon).\]
Of course, one can switch the order of the matrices because of the commutativity of $Z_N$.
Suppose
\[U=\begin{bmatrix}u_0&0\\ u_1&u_0\end{bmatrix},\ \ V=\begin{bmatrix}v_0&0\\ v_1&v_0\end{bmatrix},\]
then the specific coupled Toda equation is
\[\epsilon\partial_{t_{0}} u_0&=& v_0(x+\epsilon)-v_0(x), \\
\epsilon\partial_{t_{0}} u_1&=& v_1(x+\epsilon)-v_1(x),\\ \label{stoda}
\epsilon\partial_{t_{0}} v_0&=& u_0(x)v_0(x)-v_0(x)v_0(x-\epsilon),\\
\epsilon\partial_{t_{0}} v_1&=&(u_1(x)-u_1(x-\epsilon))v_0(x)-v_1(x)(u_0(x)-u_0(x-\epsilon).\]
 To get the standard matix-Toda equation, one need to use the alternative expressions
\begin{gather}\label{exp-omega1}
\begin{aligned}
  U&:=\omega_1(x)-\omega_1(x+\epsilon)=\epsilon\partial_{t_1}\phi(x),\\
 V&= \Exp{\phi(x)}\Exp{-\phi(x-\epsilon)}=-\epsilon\partial_{t_1}\omega_1(x).
\end{aligned}
\end{gather}

From Sato equation we deduce the following set of nonlinear
partial differential-difference equations
\begin{align}\left\{
\begin{aligned}
 \omega_1(x)-\omega_1(x+\epsilon)&=\epsilon\partial_{t_1}(\Exp{\phi(x)})\cdot\Exp{-\phi(x)},\\
\epsilon\partial_{t_1}\omega_1(x)&=-\Exp{\phi(x)}\Exp{-\phi(x-\epsilon)}.\end{aligned}\right.
\label{eq:multitoda}
\end{align}
Observe that if we cross the two first equations, then we get
\begin{align*}
  \epsilon^2\partial_{t_1}^2\phi(x)=
  \Exp{\phi(x+\epsilon)}\Exp{-\phi(x)}-\Exp{\phi(x)}\Exp{-\phi(x-\epsilon)}
\end{align*}
which is the $N\times N$ matrix-valued extension of the Toda equation, from which the original Toda equation appears for $N=1$.
When $N=2$, the equation for $\phi=\begin{bmatrix}\phi_0&0\\ \phi_1&\phi_0\end{bmatrix}$ is the following coupled Toda system
\begin{align*}
  \epsilon^2\partial_{t_1}^2\phi_0(x)&=
  \Exp{\phi_0(x+\epsilon)-\phi_0(x)}-\Exp{\phi_0(x)-\phi_0(x-\epsilon)},\\
  \epsilon^2\partial_{t_1}^2\phi_1(x)&=
  (\phi_1(x+\epsilon)-\phi_1(x))\Exp{\phi_0(x+\epsilon)-\phi_0(x)}-  (\phi_1(x)-\phi_1(x-\epsilon))\Exp{\phi_0(x)-\phi_0(x-\epsilon)}.\\
\end{align*}
In the calculation, the identity $e^{\begin{bmatrix}\phi_0&0\\ \phi_1&\phi_0\end{bmatrix}}=\begin{bmatrix}e^{\phi_0}&0\\ \phi_1e^{\phi_0}&e^{\phi_0}\end{bmatrix}$
is used.
Besides above $Z_N$-Toda equations, with logarithm flows the EZTH also contains some extended flow equations in the next part.
Here we consider  the extended flow equations in the simplest case, i.e. the $s_{0}$ flow for $\L=\Lambda+u_0+u_1\Lambda^{-1},$
\[\epsilon \d_{s_{0}}\L&=&[(S\epsilon \d_x S^{-1})_+,\L]\\
&=&[\epsilon \d_xS S^{-1},\L]\\
&=&\epsilon\L_x,\]
which leads to  the following specific equation
\[\d_{s_{0}}U&=& U_{x} ,\ \ \d_{s_{0}}V= V_{x} .\]

To see the extended equations clearly, one need to rewrite the extended flows  in the Lax equations of the EZTH as in the following lemma.

\begin{lemma}\label{modifiedLax}

The extended flows in Lax formulation of the EZTH can be equivalently given
by
\begin{equation}
  \label{edef2}
\epsilon\frac{\partial \L}{\partial s_{j}} = [D_{j} ,\L ],
\end{equation}
\begin{align}
  &D_{j} = (\frac{\L^j}{j!}(\log_+ \L-c_j))_+-(\frac{\L^j}{j!}(\log_- \L-c_j))_-,
\end{align}
which can also be rewritten in the form
\begin{equation}
  \label{edef2'}
\epsilon \frac{\partial \L}{\partial s_n} = [\bar  D_n ,\L ],
\end{equation}
\begin{align} \notag
 \bar D_{j} &= \frac{\L^j}{j!}\epsilon \d+ [ \frac{\L^j}{j!} (\sum_{k < 0} W_k(x) \Lambda^k -c_j)]_+-[ \frac{\L^j}{j!}  (\sum_{k \geq 0} W_k(x) \Lambda^k -c_j)]_-.
\end{align}

\end{lemma}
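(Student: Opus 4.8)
The plan is to start from the already-established Lax equation for the $s_j$ flow in Proposition \ref{Lax}, namely
\[
\epsilon(\log\L)_{s_j}=[-(D_j)_-,\log_+\L]+[(D_j)_+,\log_-\L],
\]
together with the $s_j$-flow $\epsilon\partial_{s_j}\L=[(D_j)_+,\L]$, and to reorganize the commutator so that the spatial derivative $\epsilon\d$ appears explicitly. The key identity I would exploit is the factorization of $\log\L$ recorded in \eqref{Log}, $\log\L=\tfrac12(\log_+\L+\log_-\L)$, where $\log_+\L=S\circ\epsilon\d\circ S^{-1}$ and $\log_-\L=-\bar S\circ\epsilon\d\circ\bar S^{-1}$. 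The Sato equations \eqref{satoSs} for $S,\bar S$ under the $s_j$ flow are what ultimately control everything, so I would first recompute $\epsilon\partial_{s_j}\L$ directly from $\L=S\circ\Lambda\circ S^{-1}$ and from $\L=\bar S\circ\Lambda^{-1}\circ\bar S^{-1}$, producing two a priori different expressions for the same derivative.

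First I would write, using $\L=S\circ\Lambda\circ S^{-1}$ and $\epsilon\partial_{s_j}S=-(D_j)_-S$, the identity $\epsilon\partial_{s_j}\L=[-(D_j)_-,\L]$, and similarly from the barred side $\epsilon\partial_{s_j}\L=[(D_j)_+,\L]$. Subtracting these gives $[(D_j)_++( D_j)_-,\L]=[D_j^{\mathrm{full}},\L]=0$ only in the naive grading, but here the point is that $D_j$ is defined through $\log\L$, which is \emph{not} a genuine element of $\g_c$ in the Laurent sense because of the $\epsilon\d$ pieces. So the genuine operator that commutes correctly must incorporate the spatial-derivative part. Concretely, I would define $D_j$ as in the statement by taking the dressed $\log_\pm\L$ and projecting: $D_j=(\tfrac{\L^j}{j!}(\log_+\L-c_j))_+-(\tfrac{\L^j}{j!}(\log_-\L-c_j))_-$. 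Then I must verify that $[D_j,\L]$ reproduces exactly the right-hand side of the $s_j$ Lax equation. The verification reduces to showing that the two projection conventions differ by a term that commutes with $\L$, which follows because $\tfrac{\L^j}{j!}(\log_+\L-\log_-\L)=\tfrac{2\L^j}{j!}(\log\L-c_j)\cdot(\text{const})$ is, up to the constants $c_j$, proportional to $\L^j\log\L$, an object commuting with $\L$.

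For the second form $\bar D_j$, the plan is to substitute the explicit Laurent expansion $\log\L=\sum_{k} W_k(x)\Lambda^k$ from \eqref{Log} and to isolate the purely differential part. Since $\log_+\L=S\epsilon\d S^{-1}=\epsilon\d+(\text{lower order in }\Lambda)$ and similarly for $\log_-\L$, the operator $\tfrac{\L^j}{j!}\log_\pm\L$ contains a leading $\tfrac{\L^j}{j!}\epsilon\d$ contribution. Extracting this common differential piece $\tfrac{\L^j}{j!}\epsilon\d$ out front and writing the remainder via the coefficients $W_k$ split according to $k<0$ and $k\geq0$ yields precisely the claimed expression for $\bar D_j$. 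The equivalence of the two forms then amounts to checking that $D_j$ and $\bar D_j$ differ only by an operator commuting with $\L$, so that $[D_j,\L]=[\bar D_j,\L]$.

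The main obstacle I anticipate is the careful bookkeeping of where the constant shifts $c_j$ and the factor-of-two in the definition \eqref{satoS} of $D_j$ enter, and more seriously the fact that $\log_+\L$ and $\log_-\L$ individually contain the nonlocal operator $\epsilon\d$, so the naive projections $(\cdot)_+$ and $(\cdot)_-$ are not the ordinary Laurent truncations. The delicate step is to show that rewriting $[(D_j)_+,\log_-\L]-[(D_j)_-,\log_+\L]$ as a single commutator $[D_j,\L]$ (or $[\bar D_j,\L]$) is legitimate; this requires that the differential parts on the two sides cancel correctly and that the residual $\Lambda$-Laurent parts match, which I expect to follow from the dressing relations but will need a short computation tracking the action of $\epsilon\d$ on the coefficients $u_i(x)$ of $\L$.
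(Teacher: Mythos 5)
Your skeleton is the right one --- it is essentially the Carlet--Dubrovin--Zhang argument for the extended Toda hierarchy, which is presumably what the authors had in mind, since the paper states Lemma \ref{modifiedLax} with no proof at all --- but the justification of your pivotal step is wrong. You claim that the lemma's $D_j=(\frac{\L^j}{j!}(\log_+ \L-c_j))_+-(\frac{\L^j}{j!}(\log_- \L-c_j))_-$ differs from $(D_j)_+$ of equation \eqref{satoS} by an operator commuting with $\L$ ``because $\frac{\L^j}{j!}(\log_+\L-\log_-\L)$ is proportional to $\frac{2\L^j}{j!}(\log\L-c_j)$.'' That identity is false: by \eqref{Log} it is the \emph{sum} $\log_+\L+\log_-\L=2\log\L$ that produces $\log\L$; the \emph{difference} is $\log_+\L-\log_-\L=2\epsilon\d+2\sum_{k<0}W_k\Lambda^k-2\sum_{k\geq 0}W_k\Lambda^k$, which is nothing like $\L^j\log\L$. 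Relatedly, your remark that $\log\L$ ``is not a genuine element of $\g_c$ because of the $\epsilon\d$ pieces'' inverts the actual situation: the $\pm\epsilon\d$ pieces of $\log_\pm\L$ cancel in the combination \eqref{Log}, so $\log\L=\sum_k W_k\Lambda^k$ is a bona fide Laurent series; it is $\log_+\L$ and $\log_-\L$ \emph{individually} that carry $\epsilon\d$. Consequently $[D_j,\L]=0$ with $D_j=\frac{2\L^j}{j!}(\log\L-c_j)$ holds genuinely, not ``only in the naive grading.''

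The computation your proposal never actually performs is the following. Since the $\epsilon\d$ parts cancel, $D_j=\frac{\L^j}{j!}(\log_+\L-c_j)+\frac{\L^j}{j!}(\log_-\L-c_j)$, and using $X_++X_-=X$ one finds that the lemma's operator minus $(D_j)_+$ equals
$-\bigl(\tfrac{\L^j}{j!}(\log_-\L-c_j)\bigr)_- -\bigl(\tfrac{\L^j}{j!}(\log_-\L-c_j)\bigr)_+=-\tfrac{\L^j}{j!}(\log_-\L-c_j)$,
and this commutes with $\L$ not through any property of $\log\L$ but through the dressing relations: $\log_-\L=-\bar S\circ\epsilon\d\circ\bar S^{-1}$ and $\L=\bar S\circ\Lambda^{-1}\circ\bar S^{-1}$, with $[\epsilon\d,\Lambda^{-1}]=0$, give $[\log_-\L,\L]=0$ (equally one may add $(D_j)_-$ and use $[\log_+\L,\L]=0$ via $S$). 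With this substituted, your opening steps are sound: the two Sato-equation expressions $\epsilon\partial_{s_j}\L=[-(D_j)_-,\L]=[(D_j)_+,\L]$ are correctly derived, and your strategy for the second form --- extracting the differential piece $\frac{\L^j}{j!}\epsilon\d$ and splitting the remainder by the sign of $k$ --- is the right manipulation. Two caveats there: the precise relations are $\log_+\L=\epsilon\d+2\sum_{k<0}W_k\Lambda^k$ and $\log_-\L=-\epsilon\d+2\sum_{k\geq0}W_k\Lambda^k$, so a factor of $2$ must be tracked (it is absent from the paper's displayed $\bar D_j$ as well, apparently a normalization slip); and under the natural convention $(X\epsilon\d)_\pm=X_\pm\epsilon\d$ the two forms $D_j$ and $\bar D_j$ come out literally equal, so your weaker claim that they ``differ by an operator commuting with $\L$'' concedes more than necessary.
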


Then one can derive the $s_1$ flow equation of the EZTH as
\[\notag &\epsilon U_{s_1}=(1-\Lambda)(V(\Lambda^{-1}-1)^{-1}\epsilon (\log V)_x)-2(\Lambda-1)V+\frac{\epsilon}2U_x^2+\epsilon V_x,\\
\notag &\epsilon V_{s_1}=((\Lambda^{-1}-1)^{-1}\epsilon V_xV^{-1}+2)(U(x-\epsilon)-U(x))V
+\epsilon V_xU(x-\epsilon)+\epsilon (U_x(x-\epsilon)+ U_x(x))V,\]
where $U,V$ without bracket behind them means $U(x),V(x)$ respectively.
To give a linear description of the EZTH, we introduce matrix wave functions  $\psi,\bar\psi$ in the following part.

The matrix wave functions of the EZTH are
defined by
\begin{gather}\label{baker-fac}
\begin{aligned}
\psi&= W\cdot\chi, &
\bar\psi&=\bar W\cdot \bar\chi,
\end{aligned}
\end{gather}
where
\[
\chi(z):=z^{\frac{x}{\epsilon}}\mathbb I_N,\ \ \bar \chi(z):=z^{-\frac{x}{\epsilon}}\mathbb I_N,\
\]
and the $``\cdot"$ means the action of an operator on a function.
Note that $\Lambda\cdot\chi=z\chi$ and  the following asymptotic expansions
can be defined
\begin{gather}\label{baker-asymp}
\begin{aligned}
  \psi&=z^{\frac{x}{\epsilon}}(\I_N+\omega_1(x)z^{-1}+\cdots)\,\psi_0(z),&\psi_0&:=
 \Exp{\sum_{j=1}^\infty t_{j}\frac{z^j}{\epsilon j!}+ s_{j}\frac{z^j}{\epsilon j!}( \log z-c_j)},& z&\rightarrow\infty,\\
\bar\psi&=z^{-\frac{x}{\epsilon}}(\bar\omega_0(x)+\bar\omega_1(x)z+\cdots)\,\bar\psi_0(z),
&\bar\psi_0&:=
\Exp{\sum_{j=0}^\infty
   t_{j}\frac{z^{-j}}{\epsilon j!}+ s_{j}\frac{z^{-j}}{\epsilon j!}(\log z-c_j)},& z&\rightarrow 0.
\end{aligned}
\end{gather}

We can further get linear equations in the following proposition.

\begin{proposition}The matrix wave functions $\psi,\bar\psi$ are subject to the following Sato equations
\begin{align}
 \L\cdot\psi&=z\psi,\ \ \ &&\L\cdot\bar\psi=z\bar\psi,\\
 \epsilon\partial_{t_j}\psi&=(B_{j})_+\cdot \psi,& \epsilon\partial_{t_j}\bar \psi&=(B_{j})_+\cdot\bar \psi,  \\
\epsilon\partial_{s_{j}}\psi&=(\frac{\L^j}{\epsilon j!}(\log_+ \L-c_j) -(D_{j})_-)\cdot \psi,& \epsilon\partial_{s_{j}}\bar \psi&=(-\frac{\L^j}{\epsilon j!}(\log_- \L-c_j)+(D_{j})_+)\cdot\bar \psi.  \end{align}
\end{proposition}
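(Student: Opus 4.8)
The plan is to derive everything directly from the dressing definitions $\psi = W\cdot\chi$ and $\bar\psi = \bar W\cdot\bar\chi$, feeding in the Sato equations for $W,\bar W$ from the preceding Proposition together with the two elementary facts $\Lambda\cdot\chi = z\chi$ and $\Lambda^{-1}\cdot\bar\chi = z\bar\chi$. Since $\chi(z)=z^{x/\epsilon}\I_N$ and $\bar\chi(z)=z^{-x/\epsilon}\I_N$ carry no dependence on the times $t_{j},s_{j}$, every flow equation reduces to differentiating the single dressing factor and then re-absorbing it into $\psi$ or $\bar\psi$.

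First I would establish the spectral equations. Using $\L = W\circ\Lambda\circ W^{-1}$ I compute $\L\cdot\psi = W\circ\Lambda\circ W^{-1}\cdot(W\cdot\chi) = W\cdot(\Lambda\cdot\chi) = z\,(W\cdot\chi) = z\psi$, where the middle step uses $\Lambda\cdot\chi = z\chi$. The companion equation $\L\cdot\bar\psi = z\bar\psi$ follows identically from the alternative expression $\L = \bar W\circ\Lambda^{-1}\circ\bar W^{-1}$ together with $\Lambda^{-1}\cdot\bar\chi = z\bar\chi$, itself a consequence of $(\Lambda\bar\chi)(x)=\bar\chi(x+\epsilon)=z^{-1}\bar\chi$. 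Next, for the $t_{j}$-flows I differentiate the definitions and insert the first line of the Sato equations for $W,\bar W$: since $\chi,\bar\chi$ are time-independent, $\epsilon\partial_{t_{j}}\psi = \epsilon(\partial_{t_{j}}W)\cdot\chi = (B_{j})_+W\cdot\chi = (B_{j})_+\cdot\psi$, and likewise $\epsilon\partial_{t_{j}}\bar\psi = (B_{j})_+\bar W\cdot\bar\chi = (B_{j})_+\cdot\bar\psi$. The $s_{j}$-flows are obtained the same way by substituting the second line of the Sato equations, which already carry the operators $\frac{\L^{j}}{j!}(\log_+\L-c_j)-(D_{j})_-$ and $-\frac{\L^{j}}{j!}(\log_-\L-c_j)+(D_{j})_+$, so that applying them to $\chi$ and $\bar\chi$ respectively reproduces the claimed formulas.

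The main obstacle is the bookkeeping attached to the logarithm operators in the $s_{j}$-flows. One must verify that $\log_+\L = W\circ\epsilon\partial\circ W^{-1}$, when applied to $\chi=z^{x/\epsilon}\I_N$, produces the $\log z$ contribution visible in the free factor $\psi_0$, which rests on the computation $\epsilon\partial\cdot\chi = \log z\cdot\chi$ (and the analogous one for $\bar\chi$). Here the precise tracking of the $\epsilon$-normalizations, and the reason that $\log_+\L$ rather than $\log_-\L$ appears in the $\psi$-equation while $\log_-\L$ appears in the $\bar\psi$-equation, traces back to the sign and ordering asymmetry of the Sato equations for $S$ versus $\bar S$ recorded in the preceding Proposition. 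Once that asymmetry is matched against the asymptotic expansions \eqref{baker-asymp}, the remaining verification is purely formal, and the proposition follows.
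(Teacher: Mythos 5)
Your proof is correct and coincides with the standard dressing argument the paper leaves implicit --- the paper states this proposition without proof, and the intended derivation is precisely yours: differentiate $\psi=W\cdot\chi$ and $\bar\psi=\bar W\cdot\bar\chi$, use the time-independence of $\chi,\bar\chi$ together with the Sato equations for $W,\bar W$ from the preceding proposition, and obtain the spectral equations from $\L=W\circ\Lambda\circ W^{-1}=\bar W\circ\Lambda^{-1}\circ\bar W^{-1}$ with $\Lambda\cdot\chi=z\chi$, $\Lambda^{-1}\cdot\bar\chi=z\bar\chi$. Note only that the factor $\frac{\L^j}{\epsilon j!}$ in the stated $s_j$-flows is an $\epsilon$-typo for $\frac{\L^j}{j!}$ (since $\epsilon\partial_{s_j}W_0=\frac{\Lambda^j}{j!}(\epsilon\partial-c_j)W_0$), so your computation, which inherits $\frac{\L^j}{j!}$ from the Sato equations for $W$ and $\bar W$, in fact proves the correctly normalized statement, consistent with the $\epsilon$-bookkeeping you flagged.
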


\section{Hirota bilinear equations}
From Lax equations, one can find the  $s_{0}$ flow is equivalent to the spatial flow $\d_x$.
Basing on this fact, Hirota bilinear equations which are equivalent to the Lax equations of the EZTH can be derived in the following proposition.
\begin{proposition}\label{HBEoper}
 $W$ and $\bar W$ are matrix-valued wave operators of the extended $Z_N$-Toda hierarchy if and only the following Hirota bilinear equations hold
\begin{align}
W\Lambda^r W^{-1}&=\bar W\Lambda^{-r}\bar W^{-1}, \ r\in \N.
   \end{align}
\end{proposition}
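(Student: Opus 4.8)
The plan is to recognize the stated operator identity as a restatement of the single-Lax reduction that defines the EZTH, namely that the two dressings of the shift produce one and the same Lax operator. Write $\L_+:=W\Lambda W^{-1}$ and $\L_-:=\bar W\Lambda^{-1}\bar W^{-1}$; the defining relation $\L=W\Lambda W^{-1}=\bar W\Lambda^{-1}\bar W^{-1}$ is then the single equation $\L_+=\L_-$. The elementary fact I would lean on is that conjugation is multiplicative, so $W\Lambda^{r}W^{-1}=(W\Lambda W^{-1})^{r}=\L_+^{\,r}$ and $\bar W\Lambda^{-r}\bar W^{-1}=\L_-^{\,r}$ for every $r$. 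Thus the entire family of bilinear equations is equivalent to $\L_+^{\,r}=\L_-^{\,r}$ for all $r\in\N$; since the member $r=1$ already gives $\L_+=\L_-$ and every higher member then follows by taking powers, the whole family collapses to the single condition $\L_+=\L_-$.

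For the ``only if'' direction I would simply note that, for wave operators of the EZTH, the equality $\L_+=\L_-=:\L$ holds by construction of the Lax operator. Raising it to the $r$-th power yields $W\Lambda^{r}W^{-1}=\L^{r}=\bar W\Lambda^{-r}\bar W^{-1}$, which is precisely the asserted bilinear identity. Here the commutativity of $Z_N$ is convenient, since it guarantees that forming these powers introduces no ordering ambiguity among the matrix coefficients.

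For the ``if'' direction I would run this backwards. The instance $r=1$ of the bilinear equations is exactly $\L_+=\L_-$, so the two a priori independent dressings $S\Lambda S^{-1}$ and $\bar S\Lambda^{-1}\bar S^{-1}$ define one and the same operator $\L$, of the band form \eqref{lax expansion}. This single $\L$ is what renders the generators $B_{j}=\L^{j+1}/(j+1)!$ and $D_{j}$ unambiguous, so that the coupled Sato system \eqref{satoSt}--\eqref{satoSs}, in which the $S$-equation and the $\bar S$-equation share the very same $B_{j}$ and $D_{j}$, becomes consistent; differentiating $W=SW_0$ and $\bar W=\bar S\bar W_0$ with the free flows \eqref{def:E}--\eqref{def:barE} and projecting along the splitting \eqref{splittingc} then recovers $\epsilon(\partial_{t_j}S)S^{-1}=-(B_{j})_-$ and $\epsilon(\partial_{t_j}\bar S)\bar S^{-1}=(B_{j})_+$ together with their $D_{j}$-analogues. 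Hence $W,\bar W$ solve the EZTH, i.e. they are wave operators.

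I expect the genuine work to lie in the extended, logarithmic flows rather than in the polynomial $B_{j}$ flows. For the latter, $B_{j}=\L^{j+1}/(j+1)!$ commutes with $\L$ and splits cleanly across \eqref{splittingc}, so the projection step is routine. For the $s_{j}$ flows the generator $D_{j}$ involves $\log\L$, which by \eqref{Log} is only the symmetric combination $\tfrac12(\log_+\L+\log_-\L)$ and lies in neither $\g_{c+}$ nor $\g_{c-}$. The delicate point is to verify that the single-Lax condition $\L_+=\L_-$ also forces $\log_+\L$, manufactured from $W,S$, and $\log_-\L$, manufactured from $\bar W,\bar S$, to assemble consistently into the one operator $\log\L$ of \eqref{Log}, so that the $D_{j}$-flow of $S$ and the $D_{j}$-flow of $\bar S$ are driven by the same logarithm. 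This is exactly where Lemma \ref{modifiedLax} and the decomposition $D_{j}=\tfrac{\L^{j}}{j!}(\log_+\L-c_j)+\tfrac{\L^{j}}{j!}(\log_-\L-c_j)$ have to be invoked, and it is the single place where the argument goes beyond the classical, non-logarithmic Toda computation.
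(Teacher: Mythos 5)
You have misread what the proposition asserts, and the misreading is fatal to your ``if'' direction. Although the statement's notation suppresses the primes, the Hirota bilinear equations are evaluated at two \emph{independent} sets of time variables, as the paper's own proof makes explicit in eq.~\eqref{HBE2} and as the application in Proposition~\ref{wave-operators} requires: $W(x,t,\Lambda)\Lambda^{r}W^{-1}(x,t',\Lambda)=\bar W(x,t,\Lambda)\Lambda^{-r}\bar W^{-1}(x,t',\Lambda)$. Your reduction of the whole family to the single equal-time condition $\L_+=\L_-$ is exactly what the statement is \emph{not}: at $t=t'$, $s=s'$ every member collapses to $\L^{r}=\L^{r}$, which holds for \emph{any} pair $S\in G_{c-}$, $\bar S\in G_{c+}$ dressing $\Lambda$ and $\Lambda^{-1}$ to a common $\L$, no matter how $S,\bar S$ depend on $t$ and $s$. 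A trivially true identity cannot be equivalent to the Sato equations, and your ``if'' argument is a non sequitur: that a single $\L$ makes $B_{j}$ and $D_{j}$ well defined shows at most that the system \eqref{satoSt}--\eqref{satoSs} is unambiguous, not that $S$ and $\bar S$ satisfy it; nothing in an identity involving no time separation can control $\partial_{t_{j}}S$ or $\partial_{s_{j}}S$.

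The two-time identity is where the work lies, and the paper's proof is an induction your proposal skips entirely. One proves $W\Lambda^{r}(\d^{\theta}W^{-1})=\bar W\Lambda^{-r}(\d^{\theta}\bar W^{-1})$ for every derivative multi-index $\d^{\theta}$ in the $t_{j},s_{j}$ (eq.~\eqref{2.7}); the base case $\theta=0$ is your ``only if'' observation, which is correct but is only the starting point. The induction step rests on the one nontrivial structural fact: $W$ and $\bar W$ satisfy the \emph{same} linear evolutions, $\epsilon\partial_{t_{j}}W=(B_{j})_+W$ and $\epsilon\partial_{t_{j}}\bar W=(B_{j})_+\bar W$, and for the logarithmic flows the two apparently different generators coincide, since $\frac{\L^{j}}{j!}(\log_+\L-c_{j})-(D_{j})_-=(D_{j})_+-\frac{\L^{j}}{j!}(\log_-\L-c_{j})$ --- precisely the decomposition $D_{j}=\frac{\L^{j}}{j!}(\log_+\L-c_{j})+\frac{\L^{j}}{j!}(\log_-\L-c_{j})$ that you mention in your final paragraph but never actually use. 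Granting this, one differentiates the induction hypothesis in $p_{j}$, subtracts the equality of the terms in which the derivative hits $W$ and $\bar W$ (these agree because the generators agree), and obtains the identity with one more derivative on $W^{-1}$, $\bar W^{-1}$; Taylor expansion in $(t',s')$ about $(t,s)$ then assembles all of these into \eqref{HBE2}. Conversely, the two-time identity, split along $\g_c=\g_{c+}\oplus\g_{c-}$, returns the Sato equations --- that splitting, not the multiplicativity of conjugation, is the substance of the converse. So your closing instinct about where $\log\L$ enters is sound, but in your framing there is no induction for it to serve; restoring the two-time formulation and the derivative induction is not a repair of your argument but the entire proof.
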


\begin{proof}
$\Rightarrow$ Set
\begin{align}
\ga&=(\ga_{0},\ga_{1},\ga_{2},\ldots;),\ \
 \beta=(\beta_{1},\beta_{2},\ldots), \end{align}
be a multi index and
\begin{align}
\d^\ga:&=\d_{t_{0}}^{\ga_{0}}\d_{t_{0}}^{\ga_{1}}\d_{t_{2}}^{\ga_{2}}\ldots\ ,\  \
\d^\beta:=\d_{s_{1}}^{\beta_{1}}\d_{s_{2}}^{\beta_{2}}\ldots\ .
\end{align}
Suppose $\d^{\theta}=\d^{\alpha}\d^{\beta}$ . Firstly we shall prove the left
statement leads to
\begin{eqnarray} \label{HBE2} W (x,t,\Lambda)\Lambda^r
W^{-1}(x,t',\Lambda) = \bar W (x,t,\Lambda)
\Lambda^{-r}\bar W^{-1}(x,t',\Lambda)
\end{eqnarray}
 for all integers $r\geq 0$.
Using the same as the method used in\cite{M,ourJMP}, by induction on $\ga $,
we shall prove that
\begin{equation} \label{2.7}
W(x,t,\Lambda)\Lambda^r(\d^\theta W^{-1}(x,t,\Lambda))
=\bar W(x,t,\Lambda) \Lambda^{-r}(\d^\theta\bar W^{-1}(x,t,\Lambda)).
\end{equation} When $\theta=0$, it is obviously true according to the definition of
matrix-valued wave operators.\\
Suppose eq.\eqref{2.7} is true in the case of $\theta\neq 0$.
 Note that

\begin{equation}
\notag\epsilon\partial_{p_{j}}
W :=
\begin{cases}
 [(\d_{t_{j}}S)S^{-1}+S  \Lambda^jS^{-1}]W,
&p_{j}=t_{j},\\
[(\d_{s_{j}}S)S^{-1}+S  \Lambda^j\d_xS^{-1}]W,  &p_{j}=s_{j},
  \end{cases}
\end{equation}
and
\begin{equation}
\notag\epsilon\partial_{p_{j}}
\bar W :=
\begin{cases}
 (\d_{t_{j}}\bar S)\bar S^{-1}\bar W,
&p_{j}=t_{j},\\
[(\d_{\bar S_{j}}\bar S)\bar S^{-1}+\bar S  \Lambda^{-j}\d_x\bar S^{-1}]\bar W,  &p_{j}=s_{j},
  \end{cases}
\end{equation}
which further lead to \\
\begin{equation}
\notag\epsilon\partial_{p_{j}}
W :=
\begin{cases}
 (B_{j})_+W,
&p_{j}=t_{j},\\
[-(D_{j})_-+  \frac{\L^j}{\epsilon j!}(\log_+ \L-c_j) ]W,  &p_{j}=s_{j},
  \end{cases}
\end{equation}
and
\begin{equation}
\notag\epsilon\partial_{p_{j}}
\bar W :=
\begin{cases}
 (B_{j})_+\bar W,
&p_{j}=t_{j},\\
[(D_{j})_+-  \frac{\L^j}{\epsilon j!}(\log_- \L-c_j) ]\bar W,  &p_{j}=s_{j}.
  \end{cases}
\end{equation}

This further implies
\begin{equation} \notag (\d_{p_{j}}W)\Lambda^{r}(\d^\theta W^{-1}) =
(\d_{p_{j}}\bar W)\Lambda^{-r}(\d^\theta\bar W^{-1}) \end{equation}
by considering \eqref{2.7} and furthermore we  get  \begin{equation} \notag
W\Lambda^r(\d_{p_{j}}\d^\theta W^{-1} )=
\bar W\Lambda^{-r}(\d_{p_{j}}\d^\theta\bar W^{-1}).
\end{equation} Thus if we increase the power of  $\d_{p_{j}}$ by 1,
eq.\eqref{2.7} still holds.
 The induction is completed.
Taylor expanding both sides of eq.\eqref{HBE2} about $t=t',s=s'$, one can finish
the proof of eq.\eqref{HBE2}.

$\Leftarrow$ Vice versa, by separating the negative and the positive part
of the equation, we can prove
$S, \ \bar S$ are a pair of matrix-valued wave operators.
\end{proof}

To give a description in terms of matrix-valued wave functions, the following symbolic definitions are needed.

If the series have forms
\begin{eqnarray*} W(x,t,s,\Lambda)=\sum_{i\in \Z} a_i(x,t,s,
\d_x)\Lambda^i \mbox{ and } \bar W(x,t,s,\Lambda)=\sum_{i\in \Z}
b_i(x,t,s, \d_x)\Lambda^{i}, \end{eqnarray*}

\begin{eqnarray*}  W^{-1}(x,t,s,\Lambda)=\sum_{i\in \Z}\Lambda^{i} a_i'(x,t,s,
\d_x) \mbox{ and } \bar W^{-1}(x,t,s,\Lambda)=\sum_{j\in
\Z}\Lambda^{j}b_j'(x,t,s, \d_x),  \end{eqnarray*} then we denote their corresponding
left symbols $\W$,  $\bar \W$ and right symbols $\W^{-1}$, $\bar \W^{-1}$
as follows
\begin{eqnarray*}
&&  \W(x,t,s,\lambda) =\sum_{i\in \Z} a_i(x,t,s,
\d_x)\lambda^i,\ \  \W^{-1}(x,t,s,\lambda)=  \sum_{i\in \Z} a_i'(x,t,s,
\d_x)\lambda^{i},\\
&&
\bar \W(x,t,s,\lambda) =\sum_{i\in \Z} b_i(x,t,s,
\d_x)\lambda^{i},\ \ \bar \W^{-1}(x,t,s,\bar t,\lambda)=\sum_{j\in
\Z}b_j'(x,t,s, \d_x)\lambda^{j}.
\end{eqnarray*}
With above preparation, it is time to give another form of Hirota bilinear equation(see the following proposition) after defining residue as $\res_{\lambda }\sum_{n\in \Z}\alpha_n \lambda^n=\alpha_{-1}$ using the similar proof as \cite{UT,M,ourJMP}.
\begin{proposition}\label{wave-operators}
Let  $s_{0} =s'_{0},$
 $S$ and
$\bar S$ are matrix-valued wave operators of the $Z_N$-Toda hierarchy if and only if for all   $m\in
\Z$, $r\in \N$ , the following Hirota bilinear identity hold

\begin{eqnarray}  \notag &&\res_{\lambda }
 \left\{
\lambda^{r+m-1}\ \W(x,t,s,\epsilon \partial_x,\lambda) \W^{-1}(x-m\epsilon,t',s', \epsilon \partial_x,\lambda)
\right\} = \\ \label{HBE3}&& \res_{\lambda }
 \left\{
\lambda^{-r+m-1}\bar \W( x,t,s,\epsilon \partial_x,\lambda )\
\bar \W^{-1}(x-m\epsilon,t',s',\epsilon \partial_x,\lambda) \right\}.
\end{eqnarray}
\end{proposition}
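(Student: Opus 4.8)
The plan is to deduce the residue (symbol-level) Hirota identity \eqref{HBE3} directly from the operator-level bilinear identity \eqref{HBE2}, which has already been established inside the proof of Proposition \ref{HBEoper}, and to run the argument backwards for the converse. The bridge between the two formulations is a purely algebraic \emph{symbol--residue correspondence} that rewrites the coefficient of a fixed power of $\Lambda$ in a product of difference operators as a residue in $\lambda$ of the product of their symbols. All the analytic content is already contained in Proposition \ref{HBEoper}; what remains is bookkeeping of this correspondence.

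First I would record the elementary composition rule. Writing $W=\sum_i a_i(x,t,s,\epsilon\d_x)\Lambda^i$ in left-symbol form and $W^{-1}=\sum_j \Lambda^j a_j'(x,t,s,\epsilon\d_x)$ in right-symbol form, and using the product law $(X\Lambda^i)\circ(Y\Lambda^j)=X\,Y(x+i\epsilon)\Lambda^{i+j}$ together with $\Lambda\d_x=\d_x\Lambda$, one obtains
\begin{equation*}
W\Lambda^r W^{-1}=\sum_{i,j} a_i(x)\,a_j'\big(x+(i+r+j)\epsilon\big)\,\Lambda^{i+r+j}.
\end{equation*}
Hence the coefficient of $\Lambda^{-m}$ is $\sum_{i+j=-r-m} a_i(x)\,a_j'(x-m\epsilon)$. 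On the other hand, expanding the symbols $\W(x,\lambda)=\sum_i a_i(x)\lambda^i$ and $\W^{-1}(x-m\epsilon,\lambda)=\sum_j a_j'(x-m\epsilon)\lambda^j$, multiplying by $\lambda^{r+m-1}$ and extracting the coefficient of $\lambda^{-1}$ forces $i+j=-r-m$ and produces exactly the same sum. This gives the key lemma
\begin{equation*}
\res_{\lambda}\big\{\lambda^{r+m-1}\,\W(x,\lambda)\,\W^{-1}(x-m\epsilon,\lambda)\big\}=\big[\,W\Lambda^r W^{-1}\,\big]_{\Lambda^{-m}},
\end{equation*}
and the identical computation with $\bar W=\sum_i b_i\Lambda^i$, $\bar W^{-1}=\sum_j\Lambda^j b_j'$ and $\Lambda^{-r}$ in place of $\Lambda^r$ produces the right-hand side of \eqref{HBE3}, with exponent $\lambda^{-r+m-1}$.

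With the lemma in hand the proposition is immediate in both directions. For the forward implication, assuming $S,\bar S$ are wave operators, Proposition \ref{HBEoper} (precisely eq.~\eqref{HBE2}) gives $W(x,t,s,\Lambda)\Lambda^r W^{-1}(x,t',s',\Lambda)=\bar W(x,t,s,\Lambda)\Lambda^{-r}\bar W^{-1}(x,t',s',\Lambda)$ for every $r\geq 0$; here I keep $s_0=s_0'$ so that the $s_0$-flow, which coincides with the spatial flow $\d_x$ as noted at the start of Section~4, introduces no extra shift and the clean translation $x\mapsto x-m\epsilon$ of the lemma applies. Two difference operators coincide iff all their $\Lambda$-power coefficients (as differential operators in $\epsilon\d_x$) agree, so equating the coefficient of $\Lambda^{-m}$ on both sides and applying the lemma to each side yields \eqref{HBE3} for every $m\in\Z$ and $r\in\N$. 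For the converse I read the lemma backwards: \eqref{HBE3}, holding for all $m\in\Z$, reconstitutes the equality of the $\Lambda^{-m}$-coefficients of $W\Lambda^r W^{-1}$ and $\bar W\Lambda^{-r}\bar W^{-1}$ for every $m$, hence the operator identity \eqref{HBE2}, and then Proposition \ref{HBEoper} identifies $S,\bar S$ as a pair of matrix-valued wave operators.

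I expect the only genuinely delicate point to be the bookkeeping caused by the coefficients $a_i,b_i$ being differential operators in $\epsilon\d_x$ (arising from the logarithm operators in $\log\L$) rather than mere functions: one must verify that $\Lambda^k$ commutes with $\d_x$, so that the shift $x\mapsto x+k\epsilon$ acts only on the $x$-argument and leaves the $\epsilon\d_x$-part of each coefficient untouched, and that for each fixed power of $\Lambda$ only finitely many pairs $(i,j)$ contribute, so that both the residue and the coefficient extraction are well defined. The remaining ingredient, that fixing $s_0=s_0'$ is exactly what places the two operator factors at lattice separation $m$ (equivalently at $x$ and $x-m\epsilon$), is standard for Toda-type hierarchies and follows from the identification of the $s_0$-flow with $\d_x$.
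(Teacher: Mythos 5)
Your proposal is correct and takes essentially the same route as the paper's own proof: the paper likewise expands $W,\bar W,W^{-1},\bar W^{-1}$ in powers of $\Lambda$, compares the coefficients of $\Lambda^{-m}$ in the operator identity \eqref{HBE2}, and observes that the resulting equality $\sum_{i+j=-m-r} a_i(x)a_j'(x-m\epsilon) = \sum_{i+j=-m+r} b_i(x)b_j'(x-m\epsilon)$ is exactly the residue identity \eqref{HBE3}. Your explicit symbol--residue lemma and the remarks on the $\epsilon\d_x$-valued coefficients commuting with $\Lambda$ simply make precise the bookkeeping that the paper leaves implicit.
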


\begin{proof}
 Let $m\in \Z$, $r\in \N$ and $s_{0} = s'_{0}$.
Put
\begin{eqnarray*} W(x,t,s,\Lambda)=\sum_{i\in \Z} a_i(x,t,s,
\d_x)\Lambda^i \mbox{ and } \bar W(x,t,s,\Lambda)=\sum_{i\in \Z}
b_i(x,t,s, \d_x)\Lambda^{i}, \end{eqnarray*}

\begin{eqnarray*} W^{-1}(x,t,s,\Lambda)=\sum_{i\in \Z}\Lambda^{i} a_i'(x,t,s,
\d_x) \mbox{ and }  \bar W^{-1}(x,t,\Lambda)=\sum_{j\in
\Z}\Lambda^{j}b_j'(x,t,s, \d_x) , \end{eqnarray*} and compare the
coefficients in front of $\Lambda^{-m}$ in eq.\eqref{HBE2}:
\begin{eqnarray*} \sum_{i+j=-m-r} a_i(x,t,s, \d_x)a_j'(x-m\epsilon,t',s',
\d_x) = \sum_{i+j=-m+r}  b_i(x,t,s,\d_x)b_j'(x-m\epsilon,t',s', \d_x).
\end{eqnarray*} This equality can be written also as eq.\eqref{HBE3}.

\end{proof}

To give Hirota quadratic function in terms of tau functions, we need to define and prove the existence of tau function of the EZTH firstly in the next section.

\section{Tau-functions of EZTH}

Introduce the following sequences:
\[t-[\lambda] &:=& (t_{j}-
  \epsilon(j-1)!\lambda^j, 0\leq j\leq \infty).
\]
A matrix-valued function $\tau\in Z_N$  depending only on the dynamical variables $t$ and
$\epsilon$ is called the  {\em \bf Matrix tau-function of the EZTH} if it
provides symbols related to matrix-valued wave operators as following,

\begin{eqnarray}\label{Mpltaukk}\bS: &=&\frac{ \tau
(s_{0}+x-\frac{\epsilon}{2}, t_{j}-\frac{\epsilon(j-1)!}{\lambda^j},s;\epsilon) }
     {\tau (s_{0}+x-\frac{\epsilon}{2},t,s;\epsilon)},\\
     \label{Mpl-1taukk}\bS^{-1}: &=&\frac{ \tau
(s_{0}+x+\frac{\epsilon}{2}, t_{j}+\frac{\epsilon(j-1)!}{\lambda^j},s;\epsilon) }
     {\tau (s_{0}+x+\frac{\epsilon}{2},t,s;\epsilon)},\\ \label{Mprtaukk}
\bar \bS:&= &\frac{ \tau
(s_{0}+x+\frac{\epsilon}{2},t_{j}+\epsilon(j-1)!\lambda^j,s;\epsilon)}
     {\tau(s_{0}+x-\frac{\epsilon}{2},t,s;\epsilon)},\\
     \bar \bS^{-1}:&= &\frac{\tau
(s_{0}+x-\frac{\epsilon}{2},t_{j}-\epsilon(j-1)!\lambda^j,s;\epsilon)}
     {\tau(s_{0}+x+\frac{\epsilon}{2},t,s;\epsilon)}.
     \end{eqnarray}
     Here the division means the multiplication of the numerator matrix by the inverse of the denominator matrix.
One can get the solution $U,V$ in terms of tau functions
as
\[U=(\log \tau)_{xx},V=\log\frac{\tau(x+\epsilon)\tau(x-\epsilon)}{\tau^2(x)}\]

When $N=2$
\[\begin{bmatrix}u_0&0\\u_1&u_0\end{bmatrix}&=&\begin{bmatrix}(\log\tau_0)_{xx}&0\\ (\frac{\tau_1}{\tau_0})_{xx}&(\log\tau_0)_{xx}\end{bmatrix},\\
e^{\begin{bmatrix}v_0&0\\v_1&v_0\end{bmatrix}}&=&\begin{bmatrix}e^{v_0}&0\\v_1e^{v_0}&e^{v_0}\end{bmatrix}\\
&=&\begin{bmatrix}\frac{\tau_0(x+\epsilon)\tau_0(x-\epsilon)}{\tau_0}&0\\ \frac{\tau_1(x+\epsilon)\tau_0(x-\epsilon)+\tau_0(x+\epsilon)\tau_1(x-\epsilon)}{\tau_0}-
\frac{\tau_0(x+\epsilon)\tau_0(x-\epsilon)\tau_1}{\tau_0^2}&\frac{\tau_0(x+\epsilon)\tau_0(x-\epsilon)}{\tau_0}\end{bmatrix}.\]
This implies
\[u_0&=&(\log\tau_0)_{xx},\ \ u_1=(\frac{\tau_1}{\tau_0})_{xx},\\
v_0&=&\log\frac{\tau_0(x+\epsilon)\tau_0(x-\epsilon)}{\tau_0},\ \ v_1=(\Lambda-1+\Lambda^{-1})\frac{\tau_1(x)}{\tau_0(x)}.\]

By the  Proposition \ref{wave-operators}, one can prove the following lemma
\begin{lemma}The following equations hold
\label{identities2s'}
\begin{eqnarray} \label{lr2}
&&\sum_{k=1}^{N}\bS(x,t,\lambda_{1})_{ik}\bS^{-1}(x+\epsilon,t+[\lambda_{2}],\lambda_{1})_{kj}= \sum_{k=1}^{N}\bar \bS(x,t,\lambda_{2})_{ik}\bar \bS^{-1}(x,t-[\lambda_{1}^{-1}],\lambda_{2})_{kj},\\
\label{i22}&&
\sum_{k=1}^{N}\bS(x,t,\lambda_{1})_{ik}\bS^{-1}(x,t-[\lambda_{2}^{-1}],\lambda_{1})_{kj}
=\sum_{k=1}^{N}\bS(x,t,\lambda_{2})_{ik}\bS^{-1}(x,t-[\lambda_{1}^{-1}],\lambda_{2})_{kj},\\
\notag&& \notag
 \sum_{k=1}^{N}\bar \bS(x,t,\lambda_{1})_{ik}\bar \bS^{-1}(x+\epsilon,t+[\lambda_{2}],\lambda_{1})_{kj}
=\sum_{k=1}^{N}\bar \bS(x,t,\lambda_{2})_{ik}\bar \bS^{-1}(x+\epsilon,t+[\lambda_{1}],\lambda_{2})_{kj}.
\end{eqnarray}
\end{lemma}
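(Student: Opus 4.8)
The plan is to obtain all three identities as specializations of the single residue identity \eqref{HBE3} of Proposition \ref{wave-operators}. First I would pass from the operator symbols $\W,\bar\W$ to the scalar dressing symbols $\bS,\bar\bS$: letting the operators in \eqref{HBE3} act on $\chi(\lambda)=\lambda^{x/\epsilon}\I_N$ and using $\Lambda\cdot\chi=\lambda\chi$ together with $\epsilon\partial_x\cdot\chi=(\log\lambda)\chi$, the symbol $\W(x,t,s,\epsilon\partial_x,\lambda)$ collapses on $\chi$ to $\lambda^{x/\epsilon}\bS(x,t,\lambda)\psi_0(t,\lambda)$ and $\bar\W$ to $\lambda^{-x/\epsilon}\bar\bS(x,t,\lambda)\bar\psi_0(t,\lambda)$, exactly as in the asymptotics \eqref{baker-asymp}; the corresponding inverse symbols produce $\bS^{-1},\bar\bS^{-1}$. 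In this way \eqref{HBE3} becomes a residue identity in the single spectral variable $\lambda$ whose two sides carry the free exponentials $\psi_0(t,\lambda)\psi_0^{-1}(t',\lambda)$ and $\bar\psi_0(t,\lambda)\bar\psi_0^{-1}(t',\lambda)$, the logarithmic $s$-contributions dropping out once we impose $s_j=s'_j$ (in particular $s_0=s'_0$, the hypothesis of Proposition \ref{wave-operators}).

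Next I would feed in shifted times. The key mechanism is that a shift $t'=t\mp[\lambda_i^{\mp1}]$ turns the free factor $\psi_0(t,\lambda)\psi_0^{-1}(t',\lambda)$ (respectively its barred analogue near $\lambda=0$) into a rational function of $\lambda$ with a simple pole at $\lambda=\lambda_i$, since $\exp\xi([\lambda_i^{-1}],\lambda)=(1-\lambda/\lambda_i)^{-1}$ while $\exp(-\xi([\lambda_i],\lambda))=(1-\lambda_i\lambda)$. Evaluating $\res_{\lambda}$ by the residue theorem then localizes the contribution at these poles, and each localization reproduces a value $\bS(\cdot,\lambda_i)$ or $\bar\bS(\cdot,\lambda_i)$; the residual $[\lambda_i^{\pm1}]$ in the argument of the surviving factor is rewritten through the tau-function definitions \eqref{Mpltaukk}--\eqref{Mprtaukk}, where the shifts $\pm[\lambda_i^{-1}]$ and the half-lattice displacements $\pm\epsilon/2$ telescope to give precisely the arguments appearing in the Lemma. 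For \eqref{i22} I would take a lattice shift within the positive ($t$) sector and a power of $\lambda$ for which the barred side of \eqref{HBE3} has no residue, so that the two finite residues at $\lambda_1$ and $\lambda_2$ balance and yield the stated $\lambda_1\leftrightarrow\lambda_2$ symmetric identity; the third, unnumbered identity is the mirror statement with the roles of the two sectors exchanged and $+[\lambda_i]$ shifts at the site $x+\epsilon$. Identity \eqref{lr2} is the genuinely mixed case: here I would choose the cross shift so that the unbarred side of \eqref{HBE3} contributes a pole at $\lambda_1$ (producing $\bS(x,t,\lambda_1)\bS^{-1}(x+\epsilon,t+[\lambda_2],\lambda_1)$) while the barred side contributes a pole at $\lambda_2$ (producing $\bar\bS(x,t,\lambda_2)\bar\bS^{-1}(x,t-[\lambda_1^{-1}],\lambda_2)$), and equate the two.

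The main obstacle is purely the bookkeeping of this localization. I must track the exact power $\lambda^{\pm r+m-1}$ in \eqref{HBE3} (with $m=(x-x')/\epsilon$) so that exactly the intended side carries a nonzero residue, keep the half-integer displacements $\pm\epsilon/2$ built into \eqref{Mpltaukk}--\eqref{Mprtaukk} aligned with the lattice shift $m$, and control the residue signs coming from $\res_{\lambda=\lambda_i}(1-\lambda/\lambda_i)^{-1}=-\lambda_i$; a wrong alignment collapses the relation to a tautology rather than to the stated Fay-type identity. The one genuinely delicate point, as opposed to routine, is verifying that after localization the $\pm[\lambda_i^{-1}]$ shift carried by the pole factor recombines with the tau-function arguments to leave exactly the single shift displayed in the Lemma, which is where the defining relations \eqref{Mpltaukk}--\eqref{Mprtaukk} must be invoked rather than the raw symbols. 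Once these are in place, Taylor expanding \eqref{HBE3} about the shifted times (as in the proof of Proposition \ref{wave-operators}) and matching coefficients finishes each of the three identities.
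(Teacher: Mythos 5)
Your overall mechanism is the one the paper intends: the paper supplies no written proof of this lemma at all beyond the sentence ``By the Proposition \ref{wave-operators}, one can prove the following lemma,'' and your plan --- collapse the operator symbols in \eqref{HBE3} onto $\chi(\lambda)=\lambda^{x/\epsilon}\I_N$ to recover $\bS,\bar\bS$ with the free exponentials of \eqref{baker-asymp}, impose $s=s'$ so the logarithmic sector drops out, then specialize $t'$ by shifts $\mp[\lambda_i^{\mp1}]$ so that the free-factor ratios become rational with simple poles, and finally localize the residue --- is precisely the standard derivation from \cite{UT,M,ourJMP} that the authors are gesturing at, including your correct bookkeeping of $\exp\xi([\lambda_i^{-1}],\lambda)=(1-\lambda/\lambda_i)^{-1}$, the sign $\res_{\lambda=\lambda_i}(1-\lambda/\lambda_i)^{-1}=-\lambda_i$, and the role of the exponents $\lambda^{\pm r+m-1}$ in killing the residue on the unwanted side.

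There is, however, one genuine logical flaw, and it sits exactly at the step you yourself single out as the delicate one: you propose to recombine the residual shifts $\pm[\lambda_i^{\pm1}]$ in the surviving factors ``through the tau-function definitions \eqref{Mpltaukk}--\eqref{Mprtaukk}.'' Within this paper that is circular. The lemma is stated and used \emph{before} any tau function is known to exist: Proposition \ref{tau-function} explicitly invokes Lemma \ref{identities2s'} to establish existence, the lemma supplying the compatibility (closedness) conditions that make the system for $\log\tau$ solvable. At the point where the lemma must be proved there is no $\tau$, so \eqref{Mpltaukk}--\eqref{Mprtaukk} cannot be used to telescope arguments; the proof has to close entirely at the level of the symbols $\bS,\bS^{-1},\bar\bS,\bar\bS^{-1}$. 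The repair is to arrange matters so that no recombination is needed: each of the three identities comes from its own specialization of \eqref{HBE3} in which the displayed arguments appear verbatim --- e.g.\ a single shift $t'=t-[\lambda_2^{-1}]$ (respectively $t'=t+[\lambda_2]$, or the mixed shift for \eqref{lr2}) together with the appropriate lattice offset $m$ and exponent $r$, evaluated at two spectral parameters and subtracted, so that the pole at $\lambda_i$ directly produces the factor $\bS^{-1}(x,t-[\lambda_2^{-1}],\lambda_1)$ etc. A second, smaller caution: $\res_\lambda$ in \eqref{HBE3} is formal coefficient extraction, not a contour integral, so the ``residue theorem'' localization must be phrased as expanding the rational factors as series in $\lambda^{-1}$ on the unbarred side and in $\lambda$ on the barred side before reading off the $\lambda^{-1}$ coefficient; with simple-pole factors this yields exactly the evaluations you claim, but it should be stated in that form rather than as contour integration.
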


Using Lemma \ref{identities2s'}, we can prove the following important proposition which gives the existence of matrix-valued tau functions.

\begin{proposition}\label{tau-function}
Given a pair of wave operators $\bS$ and $\ \bar \bS$ of the EZTH there
exists corresponding matrix-valued invertible tau-functions $\tau\in Z_N$, which is unique up to the
multiplication by a non-vanishing function independent of $t_{j}, j\geq 1$.
\end{proposition}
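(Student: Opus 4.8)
The plan is to construct $\tau$ by interpreting the defining formulas \eqref{Mpltaukk}--\eqref{Mprtaukk} as a compatible overdetermined system for $\log\tau$ and then integrating it. The crucial simplification is that all quantities take values in the \emph{commutative} algebra $Z_N=\C[\Gamma]/(\Gamma^N)$: since $\Gamma$ is nilpotent, every invertible element (one whose $\Gamma^0$-component is nonzero) has a well-defined logarithm and exponential, computed by power series that terminate because $\Gamma^N=0$. Consequently the division in \eqref{Mpltaukk}--\eqref{Mprtaukk} is unambiguous, and the scalar Sato-theoretic arguments of \cite{UT,M,ourJMP} transfer verbatim provided one reads every $\log$ and every quotient as the corresponding $Z_N$-valued operation.

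First I would read off, from the expansions $\bS=\I_N+\omega_1\lambda^{-1}+\cdots$ and $\bar\bS=\bar\omega_0+\bar\omega_1\lambda+\cdots$, candidate expressions for the logarithmic derivatives $\epsilon\partial_{t_j}\log\tau$ as residues in $\lambda$ built from $\bS$ and $\bar\bS$ (equivalently, as the coefficients in the $\lambda$-expansions of $\log\bS$ and $\log\bar\bS$). The objective is then to show that the $Z_N$-valued one-form $\theta:=\sum_{j\geq 1}(\epsilon\partial_{t_j}\log\tau)\,dt_j$ is closed, that is, that the mixed derivatives commute: $\partial_{t_i}(\epsilon\partial_{t_j}\log\tau)=\partial_{t_j}(\epsilon\partial_{t_i}\log\tau)$ for all $i,j$.

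The heart of the proof, and the step I expect to be the main obstacle, is to extract exactly these closedness relations from the bilinear identities of Lemma~\ref{identities2s'}. Expanding \eqref{lr2} and \eqref{i22} as formal power series in $\lambda_1^{-1}$ and $\lambda_2$, with the Miwa shifts appearing there, and equating coefficients should reproduce precisely the cross-derivative identities for $\theta$, together with the matching condition tying the $\bS$-ratios to the $\bar\bS$-ratios. The delicate point is verifying that \emph{every} coefficient equation so obtained is of pure cross-derivative type and that no obstruction is hidden in the off-diagonal (nilpotent) matrix entries; here the commutativity of $Z_N$ is exactly what guarantees that these nilpotent components do not disturb the closedness.

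Once $\theta$ is closed, the time domain $(t_j)_{j\geq1}$ being simply connected, the Poincar\'e lemma furnishes a potential $\log\tau$ and hence an invertible $Z_N$-valued $\tau$, its invertibility following from the nonvanishing exponential leading term. It then remains to check that this $\tau$ reproduces \eqref{Mpltaukk}--\eqref{Mprtaukk}: this is a direct comparison of $\lambda$-expansions, since $\log\bigl(\tau(t-[\lambda])/\tau(t)\bigr)=-\sum_{j\geq1}\epsilon(j-1)!\lambda^{-j}\,\partial_{t_j}\log\tau+\cdots$ must agree with $\log\bS$ by the very construction of $\theta$, and similarly for the barred symbols. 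Finally, uniqueness is immediate: $\tau$ is recovered only by integrating $d\log\tau$ in the variables $t_j$ with $j\geq1$, so it is determined up to multiplication by an invertible $Z_N$-valued factor independent of those $t_j$, which is precisely the claimed ambiguity.
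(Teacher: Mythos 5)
Your proposal follows essentially the same route as the paper: the paper's proof likewise recasts the defining relations \eqref{Mpltaukk}--\eqref{Mprtaukk} as a system for $\log\tau$ (writing the Miwa shifts as exponentials of derivations, e.g.\ $\log\bS=\bigl(\exp\bigl(-\epsilon\sum_{j\geq 0}j!\,\lambda^{-(j+1)}\partial_{t_j}\bigr)-1\bigr)\log\tau$), and then invokes Lemma~\ref{identities2s'} to secure the compatibility needed to integrate it, exactly as in \cite{UT,M,ourJMP}. Your write-up is in fact more explicit than the paper's sketch (which leaves the closedness extraction and the Poincar\'e-lemma integration implicit), but the key ingredients---the commutativity/nilpotency of $Z_N$ making $\log$ and division well-defined, the bilinear identities of Lemma~\ref{identities2s'} as the source of the cross-derivative relations, and uniqueness up to a factor independent of $t_j$, $j\geq 1$---coincide with the paper's argument.
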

\begin{proof} Here, we shall note that the $Z_N$-valued tau function $\tau(x,\t)$
corresponding to the wave operators $\bS$ and  $\ \bar \bS$ is in fact
$\tau(x-\epsilon/2,\t)$.
\\The system is equivalent to:
\begin{eqnarray*}
&& \label{eq1} \log \bS = \(\exp\left({-\epsilon\sum_{j=0}^\infty j!\lambda^{-(j+1)}\partial_{t_{j}}}\right)-1\)\log \tau, \\
\label{eq2} && \log  \bar \bS = \(\exp\left(\epsilon\d_{x}+\epsilon\sum_{j=0}^\infty j!\lambda^{j+1}\partial_{t_{j}}\right)-1\)\log \tau, \\
\label{eq3} &&\d_{s_{0}} \log \tau(x,\t) = \d_x \log \tau(x,\t).
\end{eqnarray*}
Then using Lemma \ref{identities2s'} will help us to derive the existence of the tau function of this hierarchy.

 \end{proof}

After giving tau functions of the EZTH, what is the Hirota bilinear equation in form of tau function becomes a natural question which will be answered in the next
section with the help of generalized Vertex operators.

\section{Generalized matrix Vertex operators and  Hirota quadratic equations}
In this section we continue to  discuss on the fundamental properties
of the tau function of the EZTH, i.e., the Hirota quadratic equations of the EZTH. So we
introduce the following vertex operators
\begin{eqnarray*}
\Gamma^{\pm a} :&=&\exp\left(\pm \frac{1}{\epsilon}
(\sum_{j=0}^\infty t_{j}\frac{\lambda^{j+1}}{(j+1)!}+ s_{j}\frac{\lambda^{j}}{j!}( \log \lambda-c_j))\right)\times\exp\left({\mp
\frac{\epsilon}{2}\partial_{s_{0}} \mp [\lambda^{-1}]_\d  }\right),\\
\Gamma^{\pm b} :&=&\exp\left(\pm  \frac{1}{\epsilon}
(\sum_{j=0}^\infty t_{j}\frac{\lambda^{-j-1}}{(j+1)!}- s_{j}\frac{\lambda^{-j}}{j!}( \log \lambda-c_j))\right)\times\exp\left({\mp
\frac{\epsilon}{2}\partial_{s_{0}} \mp [\lambda]_{\d}  }\right),
\end{eqnarray*}

where\ \
\begin{eqnarray*}
 [\lambda]_\d  :&=&
\epsilon\sum_{j=0}^\infty j!\lambda^{j+1}\partial_{t_{j}}.
\end{eqnarray*}

Because of the logarithm $\log \lambda$, the vertex
operators  $\Gamma^{\pm a} \otimes \Gamma^{\mp a}$ and
$\Gamma^{\pm b} \otimes \Gamma^{\mp b }$  are multi-valued
function. There
are monodromy factors $M^a$ and $M^b$ respectively as following
among different branches around $\lambda=\infty$
\begin{equation} M^{a}= \exp \left\{ \pm \frac{2\pi i}{\epsilon}
\sum_{j\geq 0}\frac{\lambda^{j}}{j!} ( s_{j} \otimes 1 - 1\otimes s_{j})
\right\},\end{equation}
\begin{equation}
 M^{b}= \exp \left\{ \pm \frac{2\pi i}{\epsilon}
\sum_{j\geq 0}\frac{\lambda^{-j}}{j!} ( s_{j} \otimes 1 - 1\otimes s_{j})
\right\}.
\end{equation}
In order to offset the complication, we need to generalize the
concept of Vertex operators which leads it to be not scalar-valued
any more but take values in a differential operator algebra in $Z_N$. So we introduce the following vertex operators
\begin{equation}\Gamma_{a} = \exp\( -\sum_{j>0}\frac{j!\lambda^{j+1}}{\epsilon
}(\epsilon\d_x)s_{j}\) \exp(x\partial_{s_{0}}),\end{equation}
\begin{equation}\Gamma_{b} = \exp\( -\sum_{j>0}\frac{j!\lambda^{-(j+1)}}{\epsilon
}(\epsilon\d_x)s_{j}\) \exp(x\partial_{s_{0}}),\end{equation}
\begin{equation}\Gamma^{\#}_{a} =\exp(x\partial_{s_{0}}) \exp\( \sum_{j>0}\frac{j!\lambda^{j+1}}{\epsilon
}(\epsilon\d_x)s_{j}\) ,\end{equation}
\begin{equation}\Gamma^{\#}_{b} =  \exp(x\partial_{s_{0}})\exp\( \sum_{j>0}\frac{j!\lambda^{-(j+1)}}{\epsilon
}(\epsilon\d_x)s_{j}\).\end{equation}
 Then \begin{equation}
 \label{double delta a} \Gamma^{\#}_{a}\otimes \Gamma_{a} = \exp(x\partial_{s_{0}})\exp\(
\sum_{j>0}\frac{j!\lambda^{j+1}}{\epsilon
}(\epsilon\d_x)(s_{j}-s'_{j}) \) \exp(x\partial_{s'_{0}}),
\end{equation}
\begin{equation}
\label{double delta b} \ \ \ \Gamma^{\#}_{b}\otimes
\Gamma_{b} = \exp(x\partial_{s_{0}})\exp\(
\sum_{j>0}\frac{j!\lambda^{-(j+1)}}{\epsilon
}(\epsilon\d_x)(s_{j}-s'_{j}) \) \exp(x\partial_{s'_{0}}).
\end{equation}

After some computation we get
\begin{eqnarray*} && \(\Gamma^{\#}_{a}\otimes \Gamma_{a} \) M^{a} =
\exp \left\{ \pm \frac{2\pi i}{\epsilon} \sum_{j> 0} \frac{\lambda^{j}}{j!} ( s_{j}-s'_{j})
\right\}\\
&& \exp\(  \pm \frac{2\pi i}{\epsilon} ((s_{0}+x) -(s'_{0}+x+ \sum_{j> 0} \frac{\lambda^{j}}{j!} ( s_{j}-s'_{j})) \) \(\Gamma^{\#}_{a}\otimes \Gamma _{a}\)
\\&=& \exp\({\pm \frac{2\pi i}{\epsilon}(s_{0}-s'_{0})}\)
\(\Gamma^{\#}_{a}\otimes \Gamma _{a}\),
\end{eqnarray*}
\begin{eqnarray*}
&& \(\Gamma^{\#}_{b}\otimes \Gamma_{ b} \) M^{b} =
\exp \left\{ \pm \frac{2\pi i}{\epsilon} \sum_{j> 0} \frac{\lambda^{-j}}{j!} ( s_{j}-s'_{j})
\right\}\\
&& \exp\(  \pm \frac{2\pi i}{\epsilon} ( (s_{0}+x) -(
s'_{0}+x+ \sum_{j> 0} \frac{\lambda^{-j}}{j!} ( s_{j}-s'_{j})) \)\(\Gamma^{\#}_{b}\otimes \Gamma_{b} \)
\\&=& \exp\({\pm \frac{2\pi i}{\epsilon}(s_{0}-s'_{0})}\)
\(\Gamma^{\#}_{b}\otimes \Gamma_{b} \).
\end{eqnarray*}
Thus when $s_{0}-s'_{0} \in \Z\epsilon $, $\(\Gamma^{\#}_{a}\otimes \Gamma_{a}\) \( \Gamma^{a}\otimes
\Gamma^{-a}\) \mbox{and}\(\Gamma^{\#}_{b}\otimes
\Gamma_{b }\)\(\Gamma^{-b}\otimes\Gamma^{b}\)$ are all
single-valued near $\lambda=\infty$.

 Now we should note that the above vertex operators  take
 value in a $Z_N$-valued differential operator algebra $\C[\d,x,t,s,\epsilon]:=\{f(x,t,\epsilon)|f(x,t,s,\epsilon)=\sum_{i\geq 0}\sum_{k\geq 0}^Nc_{ik}(x,t,s,\epsilon)\Gamma^k\d^i\}$.

\begin{theorem}\label{t11}
The invertible $Z_N$-valued matrix $\tau(t,s,\epsilon)$  is a tau-function of the EZTH if and only if it
satisfies the  following Hirota quadratic equations  of the EZTH,
\begin{equation} \label{HBE} \res_{{\rm{\lambda}}}
 \lambda^{r-1}\(\Gamma^{\#}_{a}\otimes \Gamma_{a}\) \( \Gamma^{a}\otimes
\Gamma^{-a}\right)(\tau
\otimes \tau ) =\res_{{\rm{\lambda}}}\lambda^{-r-1}\(\Gamma^{\#}_{b}\otimes
\Gamma_{b }\)\(\Gamma^{-b}\otimes\Gamma^{b} \) (\tau
\otimes \tau )
\end{equation}
computed at $s_{0}-s'_{0}=l\epsilon$
 for each  $l\in \Z$, $r\in \N$.
\end{theorem}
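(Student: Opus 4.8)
The plan is to reduce the claimed Hirota quadratic equations \eqref{HBE} to the already-established bilinear identity \eqref{HBE3} of Proposition \ref{wave-operators} by substituting the tau-function representation of the symbols, and to run the argument in both directions using the same dictionary. First I would take \eqref{HBE3} and expand the operator symbols through $W=S\circ W_0$, $\bar W=\bar S\circ\bar W_0$, so that each side factors into a dressing symbol ($\bS$ or $\bar\bS$) times the free exponential coming from $W_0,\bar W_0$. Acting on $\chi(z)=z^{x/\epsilon}\I_N$ this free exponential is exactly the prefactor $\psi_0$ (resp. $\bar\psi_0$) of the asymptotic expansions \eqref{baker-asymp}, in which $\epsilon\d_x$ is replaced by $\log z$. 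Keeping $\epsilon\d_x$ as an operator at the symbol level is precisely what forces the vertex operators to be operator-valued rather than scalar, and is the reason for introducing $\Gamma_a,\Gamma_b,\Gamma^{\#}_a,\Gamma^{\#}_b$.

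Next I would substitute the tau representations \eqref{Mpltaukk}--\eqref{Mprtaukk} for $\bS,\bS^{-1},\bar\bS,\bar\bS^{-1}$ and match each structural ingredient of the vertex operators with a piece of these formulas. Concretely, the multiplicative exponentials $\exp\bigl(\pm\frac1\epsilon(\sum_j t_j\frac{\lambda^{j+1}}{(j+1)!}+s_j\frac{\lambda^j}{j!}(\log\lambda-c_j))\bigr)$ inside $\Gamma^{\pm a}$ are the $\psi_0$-type prefactors; the argument shifts $t_j\mapsto t_j\mp\epsilon(j-1)!\lambda^{\pm j}$ carried by the tau arguments are generated by the $\exp(\mp[\lambda^{-1}]_\d)$ factors in $\Gamma^{\pm a}$; the half-integer shifts $s_0+x\pm\epsilon/2$ are produced by $\exp(\mp\frac{\epsilon}2\partial_{s_0})$; and the spatial shift $x\mapsto x-m\epsilon$ in \eqref{HBE3}, together with the operator action $\epsilon\d_x$ on the $s_j$-dependent part, is encoded exactly by the paired operators $\Gamma^{\#}_a\otimes\Gamma_a$ and $\Gamma^{\#}_b\otimes\Gamma_b$ of \eqref{double delta a}--\eqref{double delta b}. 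With these identifications the two residues in \eqref{HBE3} turn into the two residues of \eqref{HBE} acting on $\tau\otimes\tau$, the commutativity of $Z_N$ allowing the matrix factors to be manipulated as in the scalar case.

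For the converse I would simply reverse the dictionary: starting from \eqref{HBE} the same substitutions recover \eqref{HBE3}, after which Proposition \ref{wave-operators} shows that the reconstructed $\bS,\bar\bS$ are genuine wave-operator symbols, and the definition of the matrix tau-function together with Proposition \ref{tau-function} identifies $\tau$ as their tau-function.

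The hard part will be the consistent treatment of the logarithmic terms $\log\lambda$ and the resulting multivaluedness. Since $\Gamma^{\pm a}\otimes\Gamma^{\mp a}$ and $\Gamma^{\pm b}\otimes\Gamma^{\mp b}$ are multi-valued with monodromy factors $M^a,M^b$, the residue in \eqref{HBE} only acquires meaning after pairing with $\Gamma^{\#}_a\otimes\Gamma_a$ (resp. $\Gamma^{\#}_b\otimes\Gamma_b$); the monodromy computation already carried out in the text shows that the products are single-valued near $\lambda=\infty$ exactly when $s_0-s_0'\in\Z\epsilon$, which is why \eqref{HBE} is evaluated at $s_0-s_0'=l\epsilon$. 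Checking that this single-valued combination reproduces \eqref{HBE3} term by term — in particular that the $\log z-c_j$ pieces of $\psi_0$ and the $\epsilon\d_x$-operator pieces recombine correctly — is the main technical obstacle, while the remaining bookkeeping of powers of $\lambda$ and the residue extraction is routine.
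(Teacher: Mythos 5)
Your proposal is correct and follows essentially the same route as the paper: the ``dictionary'' you describe --- matching the multiplicative exponentials, the shift factors $\exp(\mp[\lambda^{\mp 1}]_\d)$, the half-step shifts from $\exp(\mp\frac{\epsilon}{2}\partial_{s_0})$, and the operator-valued factors $\Gamma^{\#}_{a}\otimes\Gamma_{a}$, $\Gamma^{\#}_{b}\otimes\Gamma_{b}$ against the tau-quotient formulas \eqref{Mpltaukk}--\eqref{Mprtaukk} --- is precisely the content of the paper's four key identities \eqref{vertex computation1}--\eqref{vertex computation4}, by which the paper likewise reduces \eqref{HBE} to the bilinear identity \eqref{HBE3} of Proposition \ref{wave-operators}. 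Your explicit handling of the converse via Propositions \ref{wave-operators} and \ref{tau-function}, and of the single-valuedness at $s_0-s_0'=l\epsilon$, only spells out steps the paper leaves implicit or cites to \cite{M,ourJMP}.
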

 \begin{proof}

 We just need  to prove that the HBEs
are equivalent to the right side in Proposition
\ref{wave-operators}. By a straightforward computation we can get
the following four identities {\allowdisplaybreaks}
\begin{eqnarray}\label{vertex computation1}
\Gamma^{\#}_a \Gamma^{a}\tau & =& \tau(s_{0}+x-\epsilon/2,t,s)
\lambda^{\ s_{0}/\epsilon} \W(x,t,s,\epsilon \d_x,\lambda )\lambda^{\I_Nx/\epsilon},
\\ \label{vertex computation2}
 \Gamma_a \Gamma^{-a}\tau  & =&
\lambda ^{-(s_{0}+x)/\epsilon}
\W^{-1}(x,t,s,\epsilon\d_x,\lambda )\tau(x+s_{0}+\epsilon/2,t,s), \\\label{vertex
computation3}
 \Gamma^{\#}_b \Gamma^{-b}\tau  & =&
\tau(x+s_{0}-\epsilon/2,t,s) \lambda^{s_{0}/\epsilon} \bar \W(x,t,s,\epsilon \d_x,\lambda
)\lambda^{ x\I_N/\epsilon}, \\\label{vertex computation4}
 \Gamma_b \Gamma^{b}\bar\tau & = &\lambda^{-s_{0}/\epsilon} \lambda^{
 -x\I_N/\epsilon} \bar \W^{-1}(x,t,s,\epsilon \d_x,\lambda)\
\tau(x+s_{0}+\epsilon/2,t,s) .
\end{eqnarray}
The proof of four equations eq.\eqref{vertex
computation1}-eq.\eqref{vertex computation4} can be derived by a similar method as in \cite{M,ourJMP}.
By substituting four equations eq.\eqref{vertex
computation1}-eq.\eqref{vertex computation4} into the HBEs
\eqref{HBE},
eq.\eqref{HBE3} is derived.
\end{proof}

 Doing a transformation on the eq.\eqref{HBE} by $\lambda\rightarrow \lambda^{-1},$ then the eq.\eqref{HBE} becomes
\begin{equation} \label{HBE'} \res_{{\rm{\lambda}}}
 \lambda^{r-1}\left(\(\Gamma^{\#}_{a}\otimes \Gamma_{a}\) \( \Gamma^{a}\otimes
\Gamma^{-a}-\Gamma^{-a}\otimes\Gamma^{a} \) \right)(\tau
\otimes \tau )=0
\end{equation}
computed at $s_{0}-s'_{0}=l\epsilon$
 for each  $l\in \Z$, $r\in \N$.
 That means
 \begin{equation} \label{HBEmilanov}\frac{d \lambda}{\lambda} \left(\(\Gamma^{\#}_{a}\otimes \Gamma_{a}\) \( \Gamma^{a}\otimes
\Gamma^{-a}-\Gamma^{-a}\otimes\Gamma^{a} \) \right)(\tau
\otimes \tau )
\end{equation}
is regular in $\lambda$
computed at $s_{0}-s'_{0}=l\epsilon$
 for each  $l\in \Z$. The eq.\eqref{HBEmilanov} in the case when $N=1$ is exactly the
  Hirota quadratic equation \eqref{milanov} of the extended Toda hierarchy in \cite{M}.
As we know, the Vertex operator in fact gives one special Backlund transformation of the EZTH. To give more information on the relations among different
solutions of the EZTH, the Darboux transformation of the EZTH will be constructed using kernel determinant technique as \cite{Hedeterminant,rogueHMB} in the next section.

\section{Darboux transformations of the EZTH}

In this section, we will consider the Darboux transformation of the EZTH on the Lax operator
 \[\L=\Lambda+U+V\Lambda^{-1},\]
 i.e.
  \[\label{1darbouxL}\L^{[1]}=\Lambda+U^{[1]}+V^{[1]}\Lambda^{-1}=W\L W^{-1},\]
where $W$ is the Darboux transformation operator.
That means after Darboux transformation, the spectral problem about  $N\times N$ matrix-valued $\phi$

\[\L\phi=\Lambda\phi+U\phi+V\Lambda^{-1}\phi=\lambda\phi,\]
will become

\[\L^{[1]}\phi^{[1]}=\lambda\phi^{[1]}.\]

To keep the Lax pair of the EZTH invariant in Proposition \ref{Lax} , i.e.
   \begin{align}
\label{laxtjk}
  \partial_{t_{j}} \L^{[1]}&= [(B_{j}^{[1]})_+,\L^{[1]}],
 \partial_{s_{j}} \L^{[1]}= [(D_{j}^{[1]})_+,\L^{[1]}],\ \  B_{j}^{[1]}:=B_{j}(\L^{[1]}), D_{j}^{[1]}:=D_{j}(\L^{[1]}),\\
   \partial_{ t_{j}} \log \L^{[1]}&= [(B_{j}^{[1]})_+ ,\log \L^{[1]}],\ \ (\log \L^{[1]})_{ s_{j}}=[ -(D_{j}^{[1]})_-,\log_+ \L^{[1]} ]+
[(D_{j}^{[1]})_+ ,\log_- \L^{[1]} ],
\end{align} the dressing operator $W$ should satisfy the following dressing equation
\[W_{t_{j}}&=&-W(B_{j})_++(WB_{j}W^{-1})_+W,\ \  j\geq 0.\]
where $W_{t_{j}}$ means the derivative of $W$ by $t_{j}.$
$W$ should also satisfy the following dressing equation
\[W_{s_{j}}&=&-W(D_{j})_++(WD_{j}W^{-1})_+W,\ \  j\geq 0.\]
where $W_{s_{j}}$ means the derivative of $W$ by $s_{j}.$

Now, we will give the following important theorem which will be used to generate new solutions.

\begin{theorem}
If $\phi$ is the first wave function of the EZTH,
the Darboux transformation operator of the EZTH  \[W(\lambda)=(1-\phi(\phi(x-\epsilon))^{-1}\La^{-1})=\phi\circ(1-\La^{-1})\circ\phi^{-1},\]

will generater new solutions $U^{[1]},V^{[1]}$ from seed solutions
$U,V$

\[\label{1uN-11}U^{[1]}&=&U+(\La-1)\phi(\phi(x-\epsilon))^{-1},\\ \label{1vN-11} V^{[1]}&=&\La^{-1}V\frac{\phi\La^{-2}\phi}{\La^{-1}\phi^{2}}.\]

\end{theorem}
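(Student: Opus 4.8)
The plan is to verify directly that the proposed operator $W(\lambda)=\phi\circ(1-\Lambda^{-1})\circ\phi^{-1}$ conjugates $\L$ into an operator of the correct Lax form $\L^{[1]}=\Lambda+U^{[1]}+V^{[1]}\Lambda^{-1}$, and then to read off $U^{[1]}$ and $V^{[1]}$ by collecting the coefficients of $\Lambda^{1},\Lambda^{0},\Lambda^{-1}$. The starting point is the spectral relation $\L\phi=\lambda\phi$, i.e.\ $\phi(x+\epsilon)+U\phi+V\phi(x-\epsilon)=\lambda\phi$, which one uses repeatedly to eliminate the second-shift terms that otherwise spoil the tridiagonal structure. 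Since everything lives in the commutative algebra $Z_N$, all the matrix factors commute, so the computation reduces to manipulating scalar-type shift relations with $\Lambda$ and $\Lambda^{-1}$ acting by $x\mapsto x\pm\epsilon$.

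First I would compute $\L^{[1]}=W\L W^{-1}$ explicitly. Using $W=\phi\circ(1-\Lambda^{-1})\circ\phi^{-1}$ and the inverse $W^{-1}=\phi\circ(1-\Lambda^{-1})^{-1}\circ\phi^{-1}$, write $\L^{[1]}=\phi\circ(1-\Lambda^{-1})\circ\phi^{-1}\L\,\phi\circ(1-\Lambda^{-1})^{-1}\circ\phi^{-1}$. The natural intermediate object is the ``gauged'' Lax operator $\tilde\L:=\phi^{-1}\L\phi$, which by the eigenvalue equation satisfies $\tilde\L\cdot 1=\lambda$ (here $1$ is the constant function), so $\tilde\L$ is a shift operator whose action on constants returns $\lambda$; this is precisely the identity that makes the telescoping in the next step work. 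Then $\L^{[1]}=\phi\circ(1-\Lambda^{-1})\,\tilde\L\,(1-\Lambda^{-1})^{-1}\circ\phi^{-1}$, and the inner conjugation $(1-\Lambda^{-1})\,\tilde\L\,(1-\Lambda^{-1})^{-1}$ is the heart of the calculation.

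The key step is to show that this inner conjugation produces only the shifts $\Lambda,\Lambda^{0},\Lambda^{-1}$ — that is, that the dangerous $\Lambda^{-2}$ coefficient cancels. This cancellation is exactly where the eigenvalue equation enters: the $\Lambda^{-2}$ term coming from $\tilde\L$'s own $\Lambda^{-1}$ component is annihilated against the series expansion of $(1-\Lambda^{-1})^{-1}=\sum_{k\geq 0}\Lambda^{-k}$ precisely because $V(x)$ and $\phi(x-\epsilon)$ are tied together through $\L\phi=\lambda\phi$. After this cancellation, conjugating back by $\phi$ and matching coefficients gives $U^{[1]}=U+(\Lambda-1)\phi\,(\phi(x-\epsilon))^{-1}$ for the $\Lambda^{0}$ part and $V^{[1]}=\Lambda^{-1}V\,\dfrac{\phi\,\Lambda^{-2}\phi}{\Lambda^{-1}\phi^{2}}$ for the $\Lambda^{-1}$ part, as claimed; here I would use the commutativity of $Z_N$ freely to reorder the $\phi$-factors into these quotient forms.

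I expect the \textbf{main obstacle} to be bookkeeping the shifted arguments correctly: because $\Lambda$ and $\Lambda^{-1}$ do not commute with the multiplication operators $\phi(x)$ and $V(x)$, every reordering shifts the spatial argument by $\pm\epsilon$, and the final compact quotient expressions for $U^{[1]},V^{[1]}$ only emerge after one has tracked all these shifts and then applied the eigenvalue equation $\phi(x+\epsilon)=\lambda\phi-U\phi-V\phi(x-\epsilon)$ to eliminate residual terms. A secondary point to check is that $W$ preserves the class $G_{c-}$ (equivalently that $\L^{[1]}$ again lies in $\g_c$ with coefficients in $Z_N$), which follows from $\phi$ being $Z_N$-valued so that $W$ is a genuine element of $G_c$; once the Lax form is verified, invariance of the full hierarchy in \eqref{laxtjk} follows from the stated dressing equation for $W$ and does not require a separate argument.
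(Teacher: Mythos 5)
Your verification is correct, and it is worth noting at the outset that the paper offers \emph{no} proof of this theorem at all --- it is stated and immediately used for the iteration --- so your argument supplies something the paper omits rather than paralleling it. Your route works exactly as planned: with $a:=\phi\,(\phi(x-\epsilon))^{-1}$, the gauged operator is $\tilde{\L}=\phi^{-1}\L\phi=A\Lambda+B+C\La^{-1}$ where $A=a(x+\epsilon)$, $B=U$, $C=Va^{-1}$, and the spectral relation $\L\phi=\lambda\phi$ is precisely the statement that $A+B+C=\lambda$ is independent of $x$. Expanding $(1-\La^{-1})^{-1}=\sum_{k\ge0}\La^{-k}$, every coefficient of $\La^{-k}$, $k\ge1$, in $\tilde{\L}(1-\La^{-1})^{-1}$ equals $A+B+C=\lambda$, so left multiplication by $(1-\La^{-1})$ telescopes the entire infinite tail, leaving $A\Lambda+\bigl(B+A-A(x-\epsilon)\bigr)+C(x-\epsilon)\La^{-1}$; conjugating back by $\phi$ restores leading coefficient $1$ and gives $U^{[1]}=U+(\La-1)a$ and $V^{[1]}=a(x)\,V(x-\epsilon)\,a(x-\epsilon)^{-1}$, which is the stated quotient formula after using commutativity of $Z_N$. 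So your localization of where the eigenvalue equation enters is accurate in this picture. (An equivalent bookkeeping, worth knowing: match coefficients directly in $W\L=\L^{[1]}W$; then the $\Lambda^{0}$ and $\Lambda^{-2}$ equations determine $U^{[1]}$ and $V^{[1]}$ with no input from the spectral relation, and the leftover $\Lambda^{-1}$ equation is an overdetermined consistency condition that holds exactly because $A+B+C$ is constant --- the eigenvalue relation used at both $x$ and $x-\epsilon$.)

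The one soft spot is your final claim that invariance of the hierarchy ``does not require a separate argument.'' It does, albeit a routine one: the formulas give \emph{solutions} only if the explicit $W=1-a\La^{-1}$ satisfies the dressing equation $\epsilon W_{t_{j}}=-W(B_{j})_{+}+(W B_{j} W^{-1})_{+}W$ stated before the theorem, and this must be checked, not cited. The check is short: set $T:=\epsilon W_{t_{j}}+W(B_{j})_{+}-(WB_{j}W^{-1})_{+}W$. Since $W(B_{j})_{-}W^{-1}$ contains only powers $\La^{k}$ with $k\le-1$, one has $(WB_{j}W^{-1})_{+}=(W(B_{j})_{+}W^{-1})_{+}$, whence $TW^{-1}$ contains only strictly negative powers of $\La$; on the other hand each of the three terms of $T$ has lowest degree $\ge-1$, so $T=T_{1}\La^{-1}$ for a single coefficient $T_{1}$. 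Because $W\phi=\phi(1-\La^{-1})\cdot 1=0$ and $\epsilon\partial_{t_{j}}\phi=(B_{j})_{+}\phi$, differentiating $W\phi=0$ gives $T\phi=0$, i.e.\ $T_{1}(x)\phi(x-\epsilon)=0$, and invertibility of $\phi$ in $Z_{N}$ (its diagonal scalar part must be nonvanishing --- a hypothesis you should state, since $Z_{N}$ has zero divisors) forces $T_{1}=0$. A parallel check is needed for the logarithmic $s_{j}$ flows with $D_{j}$, which is more delicate because $\log_{\pm}\L$ involve $\epsilon\partial_x$; the paper is silent on this too, so at minimum flag it rather than fold it into ``follows from the stated dressing equation.''
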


Define $ \phi_i=\phi_i^{[0]}:=\phi|_{\lambda=\lambda_i}$, then one can choose the specific one-fold  Darboux transformation of the EZTH as following
\[W_1(\lambda_1)=\I_N-\phi_1(\phi_1(x-\epsilon))^{-1}\La^{-1}.\]

Meanwhile, we can also get Darboux transformation on wave function $\phi$ as following

 \[\phi^{[1]}=(\I_N-\phi_1(x)(\phi_1(x-\epsilon))^{-1}\La^{-1})\phi.\]
Then using iteration on Darboux transformation, the $j$-th Darboux transformation from the $(j-1)$-th solution is as

\[\phi^{[j]}&=&(\I_N-\frac{\phi_j^{[j-1]}}{\La^{-1}\phi_j^{[j-1]}}\La^{-1})\phi^{[j-1]},\\
U^{[j]}&=&U^{[j-1]}+(\La-1)\frac{\phi_j^{[j-1]}}{\La^{-1}\phi_j^{[j-1]}},\\
V^{[j]}&=&(\La^{-1} V^{[j-1]})\frac{\phi_j^{[j-1]}}{\La^{-1}\phi_j^{[j-1]}}\frac{\La^{-2}\phi_j^{[j-1]}}{\La^{-1}\phi_j^{[j-1]}},\]
where $ \phi_i^{[j-1]}:=\phi^{[j-1]}|_{\lambda=\lambda_i},$ are wave functions corresponding to different spectrals with the $(j-1)$-th solutions $U^{[j-1]},V^{[j-1]}.$ It can be checked that $ \phi_i^{[j-1]}=0,\ \ i=1,2,\dots, j-1.$

After iteration on Darboux transformations,  we can generalize the Darboux transformation to $n$-fold case which is contained in the following theorem.

\begin{theorem}\label{ndarboux}
The $n$-fold  Darboux transformation of EZTH equation is as following
\[W_n=\I_N+t_1^{[n]}\Lambda^{-1}+t_2^{[n]}\Lambda^{-2}+\dots+t_{n}^{[n]}\Lambda^{-n}\]
where

\[ W_n\cdot\phi_{i}|_{i\leq n}=0.\]

The Darboux transformation leads to new solutions from seed solutions
\[U^{[n]}&=&U+(\La-1)t_1^{[n]},\\
V^{[n]}&=&t_n^{[n]}(x)(\La^{-n}V)t_n^{[n]-1}(x-\epsilon).\]
where
\[\notag &&(W_n)_{ij}=\frac{1}{\Delta_n}\\ \notag
&&\left|\begin{matrix}\begin{smallmatrix}
\delta_{ij}&0&\dots & \La^{-1}&\dots & 0&0&\dots & 0&\dots & \La^{-n}&\dots & 0\\
0&\phi_{1,11}(x-\epsilon)&\dots & \phi_{1,j1}(x-\epsilon)&\dots &\phi_{1,N1}(x-\epsilon)&\phi_{1,11}(x-2\epsilon)&\dots & \phi_{1,N1}(x-2\epsilon)&\dots & \phi_{1,j1}(x-n\epsilon)&\dots & \phi_{1,N1}(x-n\epsilon)\\
0&\phi_{1,12}(x-\epsilon)&\dots & \phi_{1,j2}(x-\epsilon)&\dots & \phi_{1,N2}(x-\epsilon)&\phi_{1,12}(x-2\epsilon)&\dots & \phi_{1,N2}(x-2\epsilon)&\dots & \phi_{1,j2}(x-n\epsilon)&\dots & \phi_{1,N2}(x-n\epsilon)\\
-\phi_{1,ii}(x)&\phi_{1,1i}(x-\epsilon)&\dots & \phi_{1,ji}(x-\epsilon)&\dots & \phi_{1,Ni}(x-\epsilon)&\phi_{1,1i}(x-2\epsilon)&\dots & \phi_{1,Ni}(x-2\epsilon)&\dots & \phi_{1,ji}(x-n\epsilon)&\dots & \phi_{1,Ni}(x-n\epsilon)\\
\dots&\dots&\dots&\dots & \dots&\dots &\dots&\dots &\dots&\dots & \dots&\dots &\dots\\
0&\phi_{1,1N}(x-\epsilon)&\dots & \phi_{1,jN}(x-\epsilon)&\dots & \phi_{1,NN}(x-\epsilon)&\phi_{1,1N}(x-2\epsilon)&\dots& \phi_{1,NN}(x-2\epsilon)&\dots & \phi_{1,jN}(x-n\epsilon)&\dots & \phi_{1,NN}(x-n\epsilon)\\
0&\phi_{2,11}(x-\epsilon)&\dots & \phi_{2,j1}(x-\epsilon)&\dots &
\phi_{2,N1}(x-\epsilon)&\phi_{2,21}(x-2\epsilon)&\dots & \phi_{2,N1}(x-2\epsilon)&\dots & \phi_{2,j1}(x-n\epsilon)&\dots & \phi_{2,N1}(x-n\epsilon)\\
0&\phi_{2,12}(x-\epsilon)&\dots & \phi_{2,j2}(x-\epsilon)&\dots & \phi_{2,N2}(x-\epsilon)&\phi_{2,12}(x-2\epsilon)&\dots & \phi_{2,N2}(x-2\epsilon)&\dots & \phi_{2,j2}(x-n\epsilon)&\dots  & \phi_{2,N2}(x-n\epsilon)\\
-\phi_{2,ii}(x)&\phi_{2,1i}(x-\epsilon)&\dots & \phi_{2,ji}(x-\epsilon)&\dots & \phi_{2,Ni}(x-\epsilon)&
\phi_{2,1i}(x-2\epsilon)&\dots  & \phi_{2,Ni}(x-2\epsilon)&\dots & \phi_{2,ji}(x-n\epsilon)&\dots & \phi_{2,Ni}(x-n\epsilon)\\
\dots&\dots&\dots&\dots & \dots&\dots &\dots&\dots & \dots&\dots  & \dots&\dots &\dots\\
0&\phi_{2,1N}(x-\epsilon)&\dots & \phi_{2,jN}(x-\epsilon)&\dots & \phi_{2,NN}(x-\epsilon)&\phi_{2,1N}(x-2\epsilon)&\dots & \phi_{2,NN}(x-2\epsilon)&\dots & \phi_{2,jN}(x-n\epsilon)&\dots & \phi_{2,NN}(x-n\epsilon)\\
\dots&\dots&\dots&\dots & \dots&\dots &\dots&\dots & \dots&\dots  & \dots&\dots &\dots\\
0&\phi_{n,11}(x-\epsilon)&\dots & \phi_{n,j1}(x-\epsilon)&\dots &
\phi_{n,N1}(x-\epsilon)&\phi_{n,21}(x-2\epsilon)&\dots & \phi_{n,N1}(x-2\epsilon)&\dots & \phi_{n,j1}(x-n\epsilon)&\dots & \phi_{n,N1}(x-n\epsilon)\\
0&\phi_{n,12}(x-\epsilon)&\dots & \phi_{n,j2}(x-\epsilon)&\dots & \phi_{n,N2}(x-\epsilon)&\phi_{n,12}(x-2\epsilon)&\dots & \phi_{n,N2}(x-2\epsilon)&\dots & \phi_{n,j2}(x-n\epsilon)&\dots  & \phi_{n,N2}(x-n\epsilon)\\
-\phi_{n,ii}(x)&\phi_{n,1i}(x-\epsilon)&\dots & \phi_{n,ji}(x-\epsilon)&\dots & \phi_{n,Ni}(x-\epsilon)&
\phi_{n,1i}(x-2\epsilon)&\dots  & \phi_{n,Ni}(x-2\epsilon)&\dots & \phi_{n,ji}(x-n\epsilon)&\dots & \phi_{n,Ni}(x-n\epsilon)\\
\dots&\dots&\dots&\dots & \dots&\dots &\dots&\dots & \dots&\dots  & \dots&\dots &\dots\\
0&\phi_{n,1N}(x-\epsilon)&\dots & \phi_{n,jN}(x-\epsilon)&\dots & \phi_{n,NN}(x-\epsilon)&\phi_{n,1N}(x-2\epsilon)&\dots & \phi_{n,NN}(x-2\epsilon)&\dots & \phi_{n,jN}(x-n\epsilon)&\dots & \phi_{n,NN}(x-n\epsilon)\end{smallmatrix}\end{matrix}
\right|,\]
\[\notag&&\Delta_n=\left|\begin{matrix}\begin{smallmatrix}
\phi_{1}(x-\epsilon)&\phi_{1}(x-2\epsilon)&\dots & \phi_{1}(x-n\epsilon)\\
\phi_{2}(x-\epsilon)&\phi_{2}(x-2\epsilon)&\dots & \phi_{2}(x-n\epsilon)\\
\dots&\dots&\dots & \dots\\
\phi_{n}(x-\epsilon)&\phi_{n}(x-2\epsilon)&\dots & \phi_{n}(x-n\epsilon)\\
\end{smallmatrix}\end{matrix}
\right|.\]
\end{theorem}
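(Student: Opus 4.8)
The plan is to build $W_n$ by iterating the one-fold Darboux transformation established above and to pin it down through its kernel. Each elementary factor $\I_N-\frac{\phi_j^{[j-1]}}{\La^{-1}\phi_j^{[j-1]}}\La^{-1}$ is monic of order one in $\La^{-1}$ and lies in the $Z_N$-valued difference algebra, so composing $n$ of them raises the order by one at each step while keeping the leading coefficient equal to $\I_N$; by induction $W_n=\I_N+t_1^{[n]}\La^{-1}+\cdots+t_n^{[n]}\La^{-n}$. Using the recorded fact that $\phi_i^{[j-1]}=0$ for $i\le j-1$, the $j$-th factor annihilates the $j$-th transformed seed, whence the composite satisfies $W_n\cdot\phi_i=0$ for every $i\le n$. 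Conversely, writing out this condition with $(\La^{-k}\phi_i)(x)=\phi_i(x-k\epsilon)$ gives
\begin{equation}
\phi_i(x)+\sum_{k=1}^n t_k^{[n]}(x)\,\phi_i(x-k\epsilon)=0,\qquad i=1,\dots,n,
\end{equation}
a linear system that determines the $n$ coefficients $t_1^{[n]},\dots,t_n^{[n]}$ uniquely once the seeds are generic.

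Next I would identify the stated determinant formulas as Cramer's rule for this system. The cleanest route is to \emph{take} the displayed quotient of determinants as the definition of $W_n$ and verify the two defining properties directly: substituting $\phi_i$ with $i\le n$ into the bordered determinant makes two block-rows proportional, so it vanishes and recovers $W_n\cdot\phi_i=0$; while the coefficient of the symbol $\delta_{ij}$ (the $\La^0$ term) collapses to $\Delta_n/\Delta_n=\I_N$, giving the monic normalization. Here $\Delta_n$ is the block-Casoratian of the shifted matrices $\phi_i(x-k\epsilon)$, and its invertibility is exactly the genericity hypothesis that makes the above system solvable.

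For the transformation formulas I would exploit the intertwining relation $W_n\circ\L=\L^{[n]}\circ W_n$, equivalent to $\L^{[n]}=W_n\,\L\,W_n^{-1}$ with $\L^{[n]}=\La+U^{[n]}+V^{[n]}\La^{-1}$. Expanding both sides as Laurent series in $\La$ with the product rule $(X\La^i)\circ(Y\La^j)=X(x)Y(x+i\epsilon)\La^{i+j}$ and matching coefficients, the $\La^0$ term expresses $U^{[n]}$ as $U$ plus the first difference $(\La-1)t_1^{[n]}$, while the bottom term $\La^{-n-1}$ gives $V^{[n]}(x)=t_n^{[n]}(x)\,V(x-n\epsilon)\,t_n^{[n]}(x-\epsilon)^{-1}$, i.e. the stated $V^{[n]}=t_n^{[n]}(x)(\La^{-n}V)\,t_n^{[n]-1}(x-\epsilon)$, with commutativity of $Z_N$ permitting the reordering.

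\textbf{The main obstacle} is showing that the conjugation $W_n\L W_n^{-1}$ preserves the tridiagonal Toda shape, i.e. that no power below $\La^{-1}$ survives. This is a degree count: $W_n\L$ is a genuine difference operator whose lowest power is $\La^{-n-1}$, and because $W_n$ is monic of order $n$ in $\La^{-1}$, the relation $\L^{[n]}W_n=W_n\L$ forces the lowest power of $\L^{[n]}$ to be exactly $\La^{-1}$ --- but only provided the leading coefficient $t_n^{[n]}$ is invertible. Invertibility is the delicate feature of the $Z_N$-reduction, since an element of $Z_N=\C[\Gamma]/(\Gamma^N)$ is a unit precisely when its $\Gamma^0$-component is nonzero; one must therefore check both that every $t_k^{[n]}$ stays inside the commutative subalgebra $Z_N$ (rather than escaping into $gl(N,\C)$) and that $t_n^{[n]}$ avoids the non-units. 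Granting these conditions, the four displayed identities follow and complete the proof.
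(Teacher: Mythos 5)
Your proposal is correct and follows essentially the same route as the paper, which in fact prints no proof of this theorem beyond the preceding iteration of one-fold transformations and the remark that $W_n\cdot\phi_i=0$ ``can be easily checked'': like the paper, you compose the elementary monic factors, characterize $W_n$ by its kernel, solve for the coefficients by Cramer's rule in the bordered-determinant form of the cited technique, and read off $U^{[n]},V^{[n]}$ by matching coefficients in $W_n\circ\L=\L^{[n]}\circ W_n$, while your explicit discussion of shape preservation and of invertibility in $Z_N$ (whose units are exactly the elements with nonvanishing $\Gamma^0$-component) supplies details the paper omits entirely. One sign caution: with the monic normalization $W_n=\I_N+\sum_{k}t_k^{[n]}\Lambda^{-k}$ and your kernel system $\phi_i(x)+\sum_k t_k^{[n]}(x)\phi_i(x-k\epsilon)=0$, the $\Lambda^0$ matching actually yields $U^{[n]}=U-(\Lambda-1)t_1^{[n]}$, consistent with the one-fold case where $t_1^{[1]}=-\phi_1(x)(\phi_1(x-\epsilon))^{-1}$, so the plus sign in your write-up (and in the theorem as printed) tacitly assumes the opposite sign convention for $t_1^{[n]}$ and you should state that convention explicitly.
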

It can be easily checked that $W_n\phi_i=0,\ i=1,2,\dots,n.$

Taking seed solution $U=(0)_{N\times N},V=\I_N$, then using Theorem \ref{ndarboux},  one can get the $n$-th new solution of the EZTH as

\[U^{[n]}&=&(1-\La^{-1})\d_{t_{0}}\log \bar W_r(\phi_1,\phi_2,\dots\phi_n),\\
V^{[n]}&=&e^{(1-\La^{-1})(1-\La^{-1})\log \bar W_r(\phi_1,\phi_2,\dots\phi_n)},\]

where $\bar W_r(\phi_1,\phi_2,\dots\phi_n)$ is the  ``Hankel" in terms of $\Gamma$
\[\bar W_r(\phi_1,\phi_2,\dots\phi_n)=det (\La^{-j+1} \phi_{n+1-i})_{1\leq i,j\leq n}.\]
``Hankel" in terms of $\Gamma$ means in the process of calculation, we treat every element $\phi_{n+1-i}$ not directly as a matrix form but as a scalar polynomial of element $\Gamma.$ After getting the values of $U^{[n]},V^{[n]}$ in terms of $\Gamma$, we then rewrite it in matrix form.

\subsection{Soliton solutions}

After the above preparation over the first Darboux transformation, in this section, we will use the first Darboux transformation of the EZTH to generate new solutions from trivial seed solutions. In particular, some matrix-valued soliton solutions will be shown using the first Darboux transformation.

For $N=2$, one can take seed solution $U=\begin{bmatrix}0&0\\ 0&0\end{bmatrix},V=\begin{bmatrix}1&0\\ 0&1\end{bmatrix}$, then the initial wave function $\phi_i$ satisfies
\[\label{initial}\Lambda\phi+\begin{bmatrix}1&0\\ 0&1\end{bmatrix}\Lambda^{-1}\phi=\begin{bmatrix}\lambda_1&0\\\lambda_2&\lambda_1\end{bmatrix}\phi,\ 1\leq i\leq n.\]

\[\phi=\exp(\frac x{\epsilon}\log\begin{bmatrix}z_1&0\\ z_2&z_1\end{bmatrix}),\ \ z_1\neq 0,\ \ \]
with $z_1+z_1^{-1}=\lambda_1,z_2+z_1^{-1}-\frac{z_2}{z_1^2}=\lambda_2.$
\[S=E+\omega_1\La^{-1}+v\frac{x}{\epsilon}\La^{-1}-\frac{x}\omega_1{\epsilon}\La^{-2}+\dots,\ \ \omega_1=cons.\]
Under this initial equation, the operator $A_{1}$ in above Lemma \ref{modifiedLax} is in form of
\begin{align}\notag
   A_{1}&=(\Lambda+\begin{bmatrix}1&0\\ 0&1\end{bmatrix}\Lambda^{-1})\epsilon\d-(\Lambda-\begin{bmatrix}1&0\\ 0&1\end{bmatrix}\Lambda^{-1}).
    \end{align}

\[\frac{\partial \phi}{\partial s_{1}} =([(\Lambda+\begin{bmatrix}1&0\\ 0&1\end{bmatrix}\Lambda^{-1})\epsilon\d-(\Lambda-\begin{bmatrix}1&0\\ 0&1\end{bmatrix}\Lambda^{-1})]\phi,\]

Then solution $\phi$  in terms of $x,s_{1}$ can be chosen in the form
\begin{align}
\phi&=\exp(\frac{x +\lambda s_{1}}{\epsilon}\log Z+\frac{ s_{1}}{\epsilon}(-Z+Z^{-1})),\ Z=\begin{bmatrix}z_1&0\\z_2&z_1\end{bmatrix},\ \ \lambda=\begin{bmatrix}\lambda_1&0\\ \lambda_2&\lambda_1\end{bmatrix},
   \end{align}
where $Z+\begin{bmatrix}1&0\\ 0&1\end{bmatrix}Z^{-1}=\begin{bmatrix}z_1&0\\z_2&z_1\end{bmatrix}+\begin{bmatrix}1&0\\ 0&1\end{bmatrix}\begin{bmatrix}z_1^{-1}&0\\-\frac{z_2}{z_1^2}&z_1^{-1}\end{bmatrix}=\lambda=\begin{bmatrix}\lambda_1&0\\ \lambda_2&\lambda_1\end{bmatrix}.$

\[\label{uN}U^{[1]}&=&(\La-1)\frac{\cosh(\frac{x +\lambda s_{1}}{\epsilon}\log Z+\frac{ s_{1}}{\epsilon}(-Z+Z^{-1}))}{\cosh(\frac{x +\lambda s_{1}}{\epsilon}\log Z+\frac{ s_{1}}{\epsilon}(-Z+Z^{-1})-\log Z)},\\ \notag
V^{[1]}&=&(1-\Lambda^{-1})^2\log[2\cosh(\frac{x +\lambda s_{1}}{\epsilon}\log Z+\frac{ s_{1}}{\epsilon}(-Z+Z^{-1}))].\]
Using $\log Z=\begin{bmatrix}\log z_1&0\\\frac{z_2}{z_1}&\log z_1\end{bmatrix},\cosh(\begin{bmatrix}a&0\\b&a\end{bmatrix})=\begin{bmatrix}\cosh a&0\\b\cosh a&\cosh a\end{bmatrix}$, one can derive the specific elements in new solutions $U^{[1]},V^{[1]}$ as

\[\label{uN1}u_0^{[1]}&=&(\La-1)\frac{\cosh(\frac{x}{\epsilon}\log z_1 +\frac{ s_{1}}{\epsilon}(\lambda_1-z_1+z_1^{-1}))}{\cosh(\frac{x-\epsilon}{\epsilon}\log z_1 +\frac{ s_{1}}{\epsilon}(\lambda_1-z_1+z_1^{-1}))},\\
u_1^{[1]}&=&(\frac{x}{\epsilon}\frac{z_2}{z_1} +\frac{ s_{1}}{\epsilon}(\lambda_2-z_2-\frac{z_2}{z_1^2}))\frac{\cosh(\frac{x}{\epsilon}\log z_1 +\frac{ s_{1}}{\epsilon}(\lambda_1-z_1+z_1^{-1}))}{\cosh(\frac{x-\epsilon}{\epsilon}\log z_1 +\frac{ s_{1}}{\epsilon}(\lambda_1-z_1+z_1^{-1}))}\\ \notag
&&-(\frac{x-\epsilon}{\epsilon}\frac{z_2}{z_1} +\frac{ s_{1}}{\epsilon}(\lambda_2-z_2-\frac{z_2}{z_1^2}))\frac{\cosh(\frac{x}{\epsilon}\log z_1 +\frac{ s_{1}}{\epsilon}(\lambda_1-z_1+z_1^{-1}))}{\cosh^2(\frac{x-\epsilon}{\epsilon}\log z_1 +\frac{ s_{1}}{\epsilon}(\lambda_1-z_1+z_1^{-1}))},\\
v_0^{[1]}&=&(1-\Lambda^{-1})^2\log[2\cosh(\frac{x}{\epsilon}\log z_1 +\frac{ s_{1}}{\epsilon}(\lambda_1-z_1+z_1^{-1}))],\\
v_1^{[1]}&=&(1-\Lambda^{-1})^2\log(\frac{x}{\epsilon}\frac{z_2}{z_1} +\frac{ s_{1}}{\epsilon}(\lambda_2-z_2-\frac{z_2}{z_1^2})).\]

Taking $z_2=\lambda_2=u_1^{[1]}=v_1^{[1]}=0$, the above soliton solutions will be reduced to soliton solutions of the scalar-valued extended Toda chain\cite{thesis}.
\section{Bi-Hamiltonian structure and tau symmetry}

To describe the integrability of the EZTH, we will construct the Bi-Hamiltonian structure and tau symmetry of the EZTH in this section as \cite{zuo}.
For a matrix $A=(a_{ij})=\sum_{i=0}^{N-1}a_{i}\Gamma^i$, the vector field $\d_A$ over EZTH is defined by
\[\d_A=\sum_{i=0}^{N-1}\sum_{k\geq 0}a_{i}^{(k)}(\frac{\d}{\d u_{i}^{(k)}}+\frac{\d}{\d v_{i}^{(k)}}).\]

For a function $\bar f=\int f dx$, we have
\[\d_A \bar f=\int\sum_{i=0}^{N-1}\sum_{k\geq 0}a_{i}^{(k)}(\frac{\d f}{\d u_{i}^{(k)}}+\frac{\d f}{\d v_{i}^{(k)}}) dx=\int Tr_N \sum_{k\geq 0}A^{(k)}(\frac{\delta f}{\delta u^{(k)}}+\frac{\delta f}{\delta v^{(k)}}) dx,\]
where
\[(\frac{\delta}{\delta u})_{ij}=\frac{\delta}{\delta u_{ji}},\ (\frac{\delta}{\delta v})_{ij}=\frac{\delta}{\delta v_{ji}},\]
and
\[Tr_N A=the \ trace \ of \begin{bmatrix}\frac 1N&\frac 1{N-1}&\cdot &1\\ 0&\frac 1N&\cdot &\frac 12\\\\ \vdots&\vdots&\vdots&\vdots \\
 0&0&\cdot &\frac 1N\end{bmatrix} A.\]
 In this section, we will consider the EZTH on Lax operator
 \[\L=\Lambda+u+e^v\Lambda^{-1},\ \ u,v\in Z_N.\]
Then we can define the hamiltonian bracket as
\[\{\bar f,\bar g\}=\int Tr_N \sum_{w,w'}\frac{\delta f}{\delta w}\{w,w'\}\frac{\delta g}{\delta w'} dx,\ \ w,w'=u_i\ or\ v_j,\ \ 0\leq i,j\leq N-1.\]
For $u(x)=\sum_{i=0}^{N-1}u_i(x)\Gamma^i,\ \ v(x)=\sum_{i=0}^{N-1}v_i(x)\Gamma^i,$ the bi-Hamiltonian structure for the
EZTH can be given by the following two compatible Poisson brackets which is a generalization in matrix form of the extended Toda hierarchy in \cite{CDZ}

\begin{eqnarray}
&&\{v_{i}(x),v_{j}(y)\}_1=\{u_{i}(x),u_{j}(y)\}_1=0,\notag\\
&&\{u_i(x),v_j(y)\}_{1}=\frac{1}{N\epsilon}\delta_{i0}\delta_{j0} \left[\Lambda-1
\right]\delta(x-y),\label{toda-pb1}\\
&& \{u_{i}(x),u_{j}(y)\}_2={1\over N\epsilon}\left[\Lambda
e^{v(x)}-
e^{v(x)} \Lambda^{-1}\right]_{i+j} \delta(x-y),\notag\\
&& \{ u_{i}(x), v_{j}(y)\}_2 = {1\over N\epsilon}
u_{i+j}(x)\left[\Lambda-1 \right]
\delta(x-y),\label{toda-pb2}\\
&& \{ v_{i}(x), v_{j}(y)\}_2 = {1\over \epsilon}\delta_{i0}\delta_{0j} \left[
\Lambda-\Lambda^{-1}\right]\delta(x-y).\notag
\end{eqnarray}

For any difference operator $A=
\sum_k A_k \Lambda^k$, define residue $Res A=A_0$.
In the following theorem, we will prove the above Poisson structure can be as the the Hamiltonian structure of the EZTH.
\begin{theorem}
The flows of the EZTH  are Hamiltonian systems
of the form
\[
\frac{\d u_i}{\d t_{k,j}}&=&\{u_i,H_{k,j}\}_1, \  \frac{\d v_i}{\d t_{k,j}}=\{v_i,H_{k,j}\}_1,
\quad k=0,1;\ j\ge 0,
\label{td-ham}
\]
with $ t_{0,j}=t_{j},t_{1,j}=s_{j}.$
They satisfy the following bi-Hamiltonian recursion relation
\[\notag
\{\cdot,H_{1,n-1}\}_2&=&n
\{\cdot,H_{1,n}\}_1+2\{\cdot,H_{0,n-1}\}_1,\ \{\cdot,H_{0,n-1}\}_2=(n+1)
\{\cdot,H_{0,n}\}_1.
\]
Here the Hamiltonians have the form
\begin{equation}
H_{k,j}=\int h_{k,j}(u,v; u_x,v_x; \dots; \epsilon) dx,\quad k=0,1; \ j\ge 0,
\end{equation}
with
\[
 h_{0,j}&=&\frac1{(j+1)!}Tr_N Res \, \L^{j+1},\
   h_{1,j}=\frac2{j!}\,Tr_N Res\left[ \L^{j}
(\log \L-c_{j})\right].
\]

\end{theorem}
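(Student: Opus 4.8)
The plan is to reduce the entire statement to variational calculus on the Lax operator $\L=\Lambda+u+e^v\Lambda^{-1}$, exploiting the commutativity of $Z_N$ and the cyclicity of the twisted trace $Tr_N$. The first thing I would record is that, on $Z_N=\C[\Gamma]/(\Gamma^N)$, the weight matrix defining $Tr_N$ is tuned so that $Tr_N(\Gamma^k)=1$ for $0\le k\le N-1$; hence $\langle a,b\rangle:=Tr_N(ab)$ is the nondegenerate invariant Frobenius pairing, and for any two difference operators $A,B$ with $Z_N$-valued coefficients one has $\int Tr_N Res[A,B]\,dx=0$, since $Res[A,B]$ is a total $x$-derivative plus a total $\Lambda$-shift. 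This cyclic property is what legitimizes all the integration-by-parts manipulations below and, combined with $\L^j$ commuting with $\log\L$, is the structural backbone of the proof.

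Next I would compute the gradients of the Hamiltonian densities. Writing $\delta\L=\delta u+e^v(\delta v)\Lambda^{-1}$ (here $\delta e^v=e^v\delta v$ because $v\in Z_N$ commutes) and using $\delta\,Tr_N Res\,\L^{m}=m\,Tr_N Res(\L^{m-1}\delta\L)$ together with the cyclic identity, one obtains
\begin{equation}
\delta H_{0,j}=\frac1{j!}\int Tr_N Res(\L^{j}\,\delta\L)\,dx,
\end{equation}
from which $\tfrac{\delta h_{0,j}}{\delta u}$ and $\tfrac{\delta h_{0,j}}{\delta v}$ are read off from the $\Lambda^{0}$- and $\Lambda^{1}$-coefficients of $\L^{j}$ (the latter carrying the $e^{v}$ factor and an $x$-shift). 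Inserting these into the first Poisson tensor, whose only nonzero block is $\{u_i,v_j\}_1=\tfrac1\epsilon\delta_{ij}(e^{\epsilon\d_x}-1)\delta(x-y)$, produces $\{u,H_{0,j}\}_1$ and $\{v,H_{0,j}\}_1$, which I would then match termwise (up to the common $1/\epsilon$ normalization built into the bracket) with the $t_j$-flow $\epsilon\partial_{t_j}\L=[(B_j)_+,\L]$, $B_j=\L^{j+1}/(j+1)!$.

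The same scheme handles $h_{1,j}$, and this is where the main obstacle lies. The density $h_{1,j}=\tfrac{2}{j!}Tr_N Res[\L^{j}(\log\L-c_j)]$ involves $\log\L=\tfrac12(\log_+\L+\log_-\L)$, a genuinely two-sided series, so the variation $\delta\log\L$ must be treated with care. The key lemma I would establish is that $\delta\log\L=\L^{-1}\delta\L+[\,\cdot\,,\log\L]$ modulo terms that vanish under $Tr_N Res$; since $\L^{j}$ commutes with $\log\L$, the commutator contribution drops by cyclicity and one gets $Tr_N Res(\L^{j}\delta\log\L)=Tr_N Res(\L^{j-1}\delta\L)$ up to a total derivative. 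Combined with the variation of $Tr_N Res(\L^{j}\log\L)$ this yields closed gradient formulas for $h_{1,j}$, and matching against the $s_j$-flow $\epsilon\partial_{s_j}\L=[(D_j)_+,\L]$ with $D_j=\tfrac{2\L^j}{j!}(\log\L-c_j)$ completes the Hamiltonian representation relative to $\{\cdot,\cdot\}_1$. The delicate point is confirming that the $\log_\pm$ splitting does not spoil the well-definedness of $\delta\log\L$ or the cyclicity of $Tr_N Res$ at each step.

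Finally, the bi-Hamiltonian recursion follows from two purely algebraic identities for the flow generators. Using $c_n-c_{n-1}=1/n$ one checks
\begin{equation}
\L\,B_{n-1}=(n+1)B_n,\qquad \L\,D_{n-1}=nD_n+2B_{n-1}.
\end{equation}
Interpreting the passage from the first to the second Poisson structure as the action of the recursion operator ``multiplication by $\L$'' on Hamiltonian vector fields, these identities translate directly into $\{\cdot,H_{0,n-1}\}_2=(n+1)\{\cdot,H_{0,n}\}_1$ and $\{\cdot,H_{1,n-1}\}_2=n\{\cdot,H_{1,n}\}_1+2\{\cdot,H_{0,n-1}\}_1$. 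To make this rigorous I would separately verify that $\{\cdot,\cdot\}_2$ is Poisson and compatible with $\{\cdot,\cdot\}_1$, checking skew-symmetry and the Jacobi identity of the pencil $\{\cdot,\cdot\}_2-\mu\{\cdot,\cdot\}_1$; thanks to the diagonal $\delta_{ij}$ structure of the matrix brackets this reduces component-by-component to the scalar extended-Toda computation of \cite{CDZ}, so that the recursion operator is well defined and the chain closes.
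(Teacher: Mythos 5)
Your overall architecture is the same as the paper's: gradients of the densities via trace--residue calculus, a key variational identity for $\log\L$, and the two operator identities $\L B_{n-1}=(n+1)B_n$, $\L D_{n-1}=nD_n+2B_{n-1}$ (which are exactly the paper's ``trivial identities'') driving the recursion. The genuine gap is in your key lemma for the extended flows. You propose $\delta\log\L=\L^{-1}\delta\L+[\,\cdot\,,\log\L]$ modulo terms killed by $\mathrm{Tr}_N\,\mathrm{Res}$, treating $\log\L$ as if it were an ordinary function of $\L$ and then discarding the commutator ``by cyclicity.'' But $\log\L$ is not a function of $\L$ in the difference-operator algebra: it is defined through the dressing operators, $\log_+\L=S\epsilon\d_x S^{-1}$, $\log_-\L=-\bar S\epsilon\d_x\bar S^{-1}$, and the exact variational formula is $\delta\log_\pm\L=[(\delta S)S^{-1},\log_\pm\L]$ (resp.\ with $\bar S$), with no $\L^{-1}\delta\L$ term at all. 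Worse, the blanket cyclicity you invoke --- $\int \mathrm{Tr}_N\,\mathrm{Res}[A,B]\,dx=0$ for arbitrary difference operators --- is false in exactly the situation where you need it: since $[\L^j,\log_\pm\L]=0$, applying it to the exact formula would give $\int \mathrm{Tr}_N\,\mathrm{Res}(\L^j\,\delta\log\L)\,dx=0$, i.e.\ a vanishing gradient for $H_{1,n}$, contradicting the theorem. The failure occurs because the coefficients of $(\delta S)S^{-1}$ and of $\log\L=\sum_k W_k\Lambda^k$ are nonlocal functionals of $u,v$ (e.g.\ $u=(1-\Lambda)\omega_1$, so $\omega_1$ is a $(1-\Lambda)^{-1}$-type potential): residues of commutators are then total differences of non-decaying quantities, whose $x$-integrals do not vanish and produce precisely the extra term $\mathrm{Tr}_N\,\mathrm{Res}(\L^{n-1}\,d\L)$. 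This careful bookkeeping is the content of the paper's equivalence relations (its eqs.\ involving $\mathrm{Tr}_N\,\mathrm{Res}[\L^n\,d(S\epsilon\d_x S^{-1})]$ and the analogue for $\bar S$), imported from Carlet--Dubrovin--Zhang; your lemma's \emph{statement} coincides with the paper's relation $\mathrm{Tr}_N\,\mathrm{Res}(\L^n\,d\log\L)\sim \mathrm{Tr}_N\,\mathrm{Res}(\L^{n-1}\,d\L)$, but your proposed derivation of it is not a proof and, taken literally, proves a false statement.

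A secondary, smaller gap: the slogan that the recursion is ``multiplication by $\L$ acting on Hamiltonian vector fields'' does not by itself convert $\L D_{n-1}=nD_n+2B_{n-1}$ into the bracket identity $\{\cdot,H_{1,n-1}\}_2=n\{\cdot,H_{1,n}\}_1+2\{\cdot,H_{0,n-1}\}_1$. The paper substantiates this by first extracting the explicit gradient formulas $\frac{\delta H_{1,n}}{\delta u}=a_{1,n;0}(x)$, $\frac{\delta H_{1,n}}{\delta v}=a_{1,n;1}(x-\epsilon)e^{v(x)}$ from the key lemma, then translating the operator identity into relations among the coefficients $a_{1,n;k}$ and matching these termwise against the second Poisson tensor; you would need this computation as well. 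On the positive side, your observation that $\mathrm{Tr}_N(\Gamma^k)=1$ gives a nondegenerate invariant pairing on $Z_N$ is correct, your treatment of $h_{0,j}$ (where all coefficients are local, so cyclicity is legitimate) is fine, and viewing the whole system as the scalar extended Toda hierarchy over the commutative ring $Z_N$ --- so that the Poisson and compatibility checks reduce to the scalar case of Carlet--Dubrovin--Zhang --- is a sound way to organize the verification (though note the brackets do not literally decouple component-by-component: $e^{v}\in Z_N$ mixes components triangularly, so the reduction is ring-theoretic rather than entrywise).
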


\begin{proof}
For $\beta=0$, i.e. the original Toda hierarchy, the proof is same as the proof in \cite{CDZ}.

Here we will prove that the flows $\frac{\d}{\d t_{1,n}}$ are also
Hamiltonian systems with respect to the first Poisson bracket.
In \cite{CDZ}, the following identity has been proved
\begin{equation}\label{dlgl-2} Tr_N Res\left[\L^n d (S\epsilon \d_x
S^{-1})\right] \sim Tr_N Res \L^{n-1} d \L,
\end{equation}
which show the validity of the following equivalence relation:
\begin{equation}\label{dlgl}
Tr_N Res\left(\L^n\, d \log_+ \L\right) \sim Tr_N Res\left(\L^{n-1} d \L\right).
\end{equation}
Here the equivalent relation $\sim$ is up to a $x$-derivative of
another 1-form.

In a similar way as eq.\eqref{dlgl-2}, we obtain the following equivalence relation
\begin{equation}\label{dlgl-3}
Tr_N Res\left[\L^n d (\bar S\epsilon\d_x \bar S^{-1})\right]\sim -\rm Tr_N Res \L^{n-1} d \L,
\end{equation}
i.e.
\begin{equation}\label{dlgl'}
Tr_N Res\left(\L^n\, d \log_- \L\right) \sim \rm Tr_N Res\left(\L^{n-1} d \L\right).
\end{equation}
Combining \eqref{dlgl} with \eqref{dlgl'} together can lead to
\begin{equation}\label{dlgl2}
Tr_N Res\left(\L^n\, d \log \L\right) \sim \rm Tr_N Res\left(\L^{n-1} d \L\right).
\end{equation}
Suppose
\[
A_{\alpha,n}=\sum_{k} a_{\alpha,n+1;k}\, \Lambda^k,
\]
Then from
\begin{equation}
  \label{edef3}
\frac{\partial \L}{\partial t_{k, n}} = [ (B_{k,n})_+ ,\L ]= [ -(B_{k,n})_- ,\L ], \ \ B_{0,n}=B_n,B_{1,n}=D_n,
\end{equation}
we can derive equation
\[\epsilon\frac{\partial u}{\partial t_{\beta, n}}&=&a_{\beta,n;1}(x+\epsilon)-a_{\beta,n;1}(x)\in Z_N,\ \beta=0,1,\\
\epsilon\frac{\partial v}{\partial t_{\beta, n}}&=&a_{\beta,n;0}(x-\epsilon) e^{v(x)}-a_{\beta,n;0}(x) e^{v(x+\epsilon)}\in Z_N.
\]

The equivalence relation (\ref{dlgl}) now readily follows from the above two equations.
By using (\ref{dlgl}) we obtain
\begin{eqnarray}
&&d \tilde h_{n}=\frac2{n!}\,d\,Tr_N Res\left[\L^{n}
\left(\log_+ \L-c_{n}\right) \right]
\notag\\
&& \sim \frac2{(n-1)!}\,Tr_N Res\left[\L^{n-1}
\left(\log_+ \L-c_{n}\right) d \L\right]+ \frac2{n!}\,Tr_N Res\left[\L^{n-1} d \L\right]\notag\\
&&=\frac2{(n-1)!}\,Tr_N Res\left[\L^{n-1} \left(\log_+ \L-c_{n-1}\right) d \L\right]\\
&&=Tr_N \left[a_{1,n;0}(x)du+a_{1,n;1}(x-\epsilon) e^{v(x)}dv\right].
\end{eqnarray}
It yields the following identities
\begin{equation}\label{dH1-u12}
\frac{\delta H_{1,n}}{\delta u}=a_{1,n;0}(x),\quad \frac{\delta H_{1,n}}
{\delta v}=a_{1,n;1}(x-\epsilon) e^{v(x)}.
\end{equation}
This agree with Lax equation

\[
\frac{\d u_i}{\d t_{1,n}}&=&\{u_i,H_{1,n}\}_1={1\over \epsilon} \left[
e^{\epsilon\,\d_x}-1\right]\frac{\delta H_{1,n}}
{\delta v_i}={1\over \epsilon}(a_{1,n;1}(x+\epsilon)-a_{1,n;1}(x))_i,\\
 \  \frac{\d v_i}{\d t_{1,n}}&=&\{v_i,H_{1,n}\}_1=\frac{1}{\epsilon} \left[1-e^{\epsilon\,\d_x}
\right]\frac{\delta H_{1,n}}
{\delta u_i}=\frac{1}{\epsilon} \left[a_{1,n;0}(x-\epsilon) e^{v(x)}-a_{1,n;0}(x) e^{v(x+\epsilon)}\right]_i.
\]

 From the above identities we see that
the flows $\frac{\d}{\d t_{1,n}}$ are Hamiltonian systems
of the form (\ref{td-ham}).
For the case of $\beta=1$ the recursion relation
follows from the following trivial identities
\begin{eqnarray}
&&n\, \frac{2}{n!} \L^{n} \left(\log_{\pm} \L-c_{n}\right)=\L\,
\frac{2}{(n-1)!}
\L^{n-1} \left(\log_{\pm} \L-c_{n-1}\right)-2\,\frac1{n!} \L^n\notag\\
&&=\frac{2}{(n-1)!} \L^{n-1} \left(\log_{\pm} \L-c_{n-1}\right)\,
\L-2\,\frac1{n!} \L^n.\notag
\end{eqnarray}
Then we get, for $\beta=1,$
\begin{eqnarray}
&&n a_{1,n+1;1}(x)=a_{1,n;0}(x+\epsilon)+ua_{1,n;1}(x)+e^va_{1,n;2}(x-\epsilon)-2a_{0,n+1;1}(x)\notag\\
&&=a_{1,n;0}(x)+u(x+\epsilon)a_{1,n;1}(x)+e^{v(x+2\epsilon)}a_{1,n;2}(x)-2a_{0,n+1;1}(x).\notag
\end{eqnarray}
This further leads to

\begin{eqnarray}
&&\{u_i,H_{1,n-1}\}_2=\{\left[\Lambda e^{v(x)}-e^{v(x)} \Lambda^{-1}\right] a_{1,n;0}(x)+
u(x) \left[\Lambda-1\right] a_{1,n;1}(x-\epsilon) e^{v(x)}\}_i\notag\\ \notag
&&
=n\left[a_{1,n+1;1}(x) e^{v(x+\epsilon)}-a_{1,n+1;1}(x-\epsilon) e^{v(x)}\right]_i+2\left[a_{0,n+1;0}(x) e^{v(x+\epsilon)}-a_{0,n+1;0}(x-\epsilon) e^{v(x)}\right]_i.\label{pre-recur}
\end{eqnarray}
This is exactly the recursion relation on flows for $u$. The similar recursion flow on $v$ can be similarly derived.
Theorem is proved till now.

\end{proof}

Similarly as \cite{CDZ}, the tau symmetry of the EZTH can be proved in the  following theorem.
\begin{theorem}\label{tausymmetry}
The EZTH has the following tau-symmetry property:
\begin{equation}
\frac{\d h_{\alpha,m}}{\d t_{\beta, n}}=\frac{\d
h_{\beta, n}}{\d t_{\alpha,m}},\quad \alpha,\beta=0,1,\ m,n\ge 0.
\end{equation}
\end{theorem}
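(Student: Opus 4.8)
The plan is to prove the tau-symmetry identity by reducing both sides to a common symmetric expression built from residues of the Lax operator $\L$, exactly in the spirit of the computation already performed for the bi-Hamiltonian structure in the preceding theorem. First I would recall the explicit form of the Hamiltonian densities: $h_{0,j}=\frac{1}{(j+1)!}\,Tr_N Res\,\L^{j+1}$ and $h_{1,j}=\frac{2}{j!}\,Tr_N Res\,[\L^{j}(\log\L-c_j)]$, and observe that by the notation $B_{0,n}=B_n$, $B_{1,n}=D_n$ these are, up to normalization, the generators of the flows themselves. The goal is to show $\d_{t_{\beta,n}}h_{\alpha,m}=\d_{t_{\alpha,m}}h_{\beta,n}$, so the natural strategy is to compute $\d_{t_{\beta,n}}h_{\alpha,m}$ directly via the Lax equation $\epsilon\,\d_{t_{\beta,n}}\L=[(B_{\beta,n})_+,\L]$ and show the result is manifestly symmetric under swapping the two index pairs.

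The key computational step uses the trace identity $Tr_N Res\,[A,\L]\,X=-Tr_N Res\,\L\,[A,X]$ valid for the cyclic $Tr_N Res$ pairing together with the splitting $A=A_++A_-$. Differentiating, say, $h_{\alpha,m}$ along $t_{\beta,n}$ produces an expression of the form $Tr_N Res\,(\text{derivative of }\L^{m+1}\text{ or }\L^m\log\L)\cdot[(B_{\beta,n})_+,\L]$, which after the cyclic manipulation and discarding total $x$-derivatives (the relation $\sim$ introduced in the previous proof) should collapse to something proportional to $Tr_N Res\,(B_{\alpha,m})_+[(B_{\beta,n})_+,\L]$ modulo $x$-derivatives. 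The crucial fact to invoke is the equivalence relation \eqref{dlgl2}, namely $Tr_N Res\,(\L^n\,d\log\L)\sim Tr_N Res\,(\L^{n-1}d\L)$, which lets me handle the logarithmic densities $h_{1,j}$ on the same footing as the polynomial ones $h_{0,j}$; without it the $\log\L$ terms cannot be reduced to residues of pure powers of $\L$. Once both $\d_{t_{\beta,n}}h_{\alpha,m}$ and $\d_{t_{\alpha,m}}h_{\beta,n}$ are written as $Tr_N Res$ of a commutator expression in $(B_{\alpha,m})_+$, $(B_{\beta,n})_+$ and $\L$, the antisymmetry of the commutator together with the cyclicity of the trace forces the two to agree up to an $x$-total-derivative, which vanishes under the density equivalence.

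The main obstacle I anticipate is the careful treatment of the logarithm and the $+/-$ projections in the matrix-valued $Z_N$ setting. In the scalar extended Toda case of \cite{CDZ} the manipulations are standard, but here $\log\L$ is the symmetrized object $\frac12(\log_+\L+\log_-\L)$ of \eqref{Log}, and one must verify that the trace identity and the equivalence \eqref{dlgl2} are genuinely compatible with the $Z_N$-valued trace $Tr_N$ defined via the weighted matrix in the section, rather than the ordinary matrix trace. In particular I would need to check that $Tr_N Res$ still satisfies the cyclic property $Tr_N Res\,(AB)=Tr_N Res\,(BA)$ on this commutative subalgebra and that projection onto $\g_{c\pm}$ commutes appropriately with the density manipulations. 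The remaining steps — extracting the coefficients $a_{\alpha,m;k}$ and matching them across the two orderings — are then routine bookkeeping of the kind already carried out in the bi-Hamiltonian proof, so I would follow that proof's structure, replacing its single-flow computation with the symmetric two-flow comparison and citing \cite{CDZ} for the purely formal parts that transfer verbatim.
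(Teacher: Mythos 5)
Your proposal contains a genuine gap: it proves a strictly weaker statement than the theorem. The tau-symmetry property is an \emph{exact} identity between the densities $h_{\alpha,m}$, pointwise in $x$ --- this exactness is precisely what later justifies defining a single function $\bar\tau$ through $h_{\beta,n}=\epsilon(\Lambda-1)\,\d\log\bar\tau/\d t_{\beta,n}$. Your strategy, by design, only yields equality modulo total $x$-derivatives: you lean on the equivalence relation $\sim$ and on \eqref{dlgl2}, both of which discard $\d_x$-exact terms, and you explicitly conclude that the two mixed derivatives ``agree up to an $x$-total-derivative, which vanishes under the density equivalence.'' Densities, unlike the integrated Hamiltonians $H_{k,j}$, do not live modulo $\mathrm{Im}\,\d_x$, so this does not establish the claim. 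Moreover, \eqref{dlgl2} is a statement about the variational differential $d\L$ under the pairing used to compute $\delta H/\delta u$ and $\delta H/\delta v$; it is not the right tool for differentiating densities along the flows, and no such reduction of the logarithmic density is needed here.

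The paper's proof is instead a short \emph{exact} residue computation with no equivalence relation at all. For $\alpha=1,\beta=0$ (the other cases being analogous), the Sato equations make the dressed operator $\L^m(\log_+\L-c_m)$ evolve by a single commutator, whence
\begin{align*}
\frac{\d h_{1,m}}{\d t_{0,n}} &= \frac{2}{m!\,(n+1)!}\,Tr_N\, Res\,[-(\L^{n+1})_-,\ \L^m(\log_+\L-c_m)]\\
&=\frac{2}{m!\,(n+1)!}\,Tr_N \,Res\,[(\L^m(\log_+\L-c_m))_+,\ (\L^{n+1})_-]\\
&=\frac{2}{m!\,(n+1)!}\,Tr_N \,Res\,[(\L^m(\log_+\L-c_m))_+,\ \L^{n+1}]\ =\ \frac{\d h_{0,n}}{\d t_{1,m}},
\end{align*}
where the two middle steps use the exact projection identities $Res\,[A_-,B_-]=0$ (a product of strictly negative parts carries only powers $\Lambda^{k}$ with $k\le -2$) and $Res\,[A_+,B_+]=0$ (the $\Lambda^0$ coefficient of such a commutator is $[a_0(x),b_0(x)]$, which vanishes because $Z_N$ is commutative). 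These two identities, valid pointwise in $x$, are also what settles your worry about the cyclicity of $Tr_N\,Res$: nothing beyond them and the Sato equations is required. One further subtlety your plan glosses over: $\log_+\L$ evolves by $[-(B_n)_-,\log_+\L]$ rather than termwise by $[(B_n)_+,\cdot]$; the two agree only because $B_n=\L^{n+1}/(n+1)!$ commutes with $\log_+\L=S\circ\epsilon\d\circ S^{-1}$ (both are dressings by $S$ of operators commuting with each other). If you redo your computation with these exact identities and drop $\sim$ and \eqref{dlgl2} entirely, you recover the paper's argument.
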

\begin{proof} Let us prove the theorem for the case when $\alpha=1, \beta=0$,
other cases are proved in a similar way
\[
&&\frac{\d h_{1,m}}{\d t_{0,n}} =\frac2{m!\,(n+1)!}\,Tr_N Res[-(\L^{n+1})_-, \L^m (\log_+\L-c_m)]\notag\\
&&=\frac2{m!\,(n+1)!}\,Tr_N Res[(\L^m (\log_+\L-c_m))_+,(\L^{n+1})_-]\notag\\
&& =\frac2{m!\,(n+1)!}\,Tr_N Res[(\L^m (\log_+\L
-c_m))_+,\L^{n+1}]=\frac{\d h_{0,n}}{\d t_{1,m}}.
\]
The theorem is proved.
\end{proof}

 This property justifies the following alternative definition of another kind of
tau function for the EZTH.

\begin{definition} Another $tau$ function in $Z_N$ of the EZTH can be defined by
the following expressions in terms of the densities of the Hamiltonians:
\begin{equation}
h_{\beta,n}=\epsilon (\Lambda-1)\frac{\d\log \bar \tau}{\d t_{\beta,n}},
\quad \beta=0,1;\ n\ge 0,
\end{equation}
with $ t_{0,j}=t_{j},t_{1,j}=s_{j}.$
\end{definition}

With above two different definitions tau functions of this hierarchy, some mysterious connections between these two kinds of tau functions become an open question. One is from the Sato theory and another is from the hamiltonian tau symmetry.

\section{Conclusions and Discussions}
In this paper, we constructed a new hierarchy called EZTH and extended  Sato theory to
this hierarchy including Sato equations, matrix wave operators, Hirota
quadratic equations, the existence of
the tau function.  Similarly as extended Toda hierarchy and extended bigraded Toda hierarchy in Gromov-Witten theory of $\C P^1$ and orbiford respectively,  this
hierarchy deserves further studying and exploring because of its
potential applications in topological quantum fields  and
Gromov-Witten theory. Basing on above two different definitions tau functions of this hierarchy, to derive the  mysterious deep connections between these two kinds of tau functions with one defined from Sato theory and another from hamiltonian tau symmetry become an interesting question. This is not easy and will be included in our future work.

{\bf {Acknowledgements:}}
  Chuanzhong Li is supported by the National Natural Science Foundation of China under Grant No. 11201251, Zhejiang Provincial Natural Science Foundation of China under Grant No. LY12A01007, the Natural Science Foundation of Ningbo under Grant No. 2013A610105, 2014A610029. Jingsong He is supported by the National Natural Science Foundation of China under Grant No. 11271210, K.C.Wong Magna Fund in
Ningbo University. We would like to thank Todor E Milanov and Dafeng Zuo for their valuable discussion.

\vskip20pt


\begin{thebibliography}{AAA1}
\frenchspacing


\bibitem{Toda}
M. Toda, Vibration of a chain with nonlinear interaction. J. Phys. Soc.
Jpn. 22(1967), 431-436.
\bibitem{Todabook}
M. Toda, Nonlinear waves and solitons(Kluwer Academic Publishers,
Dordrecht, Holland, 1989).
 \bibitem{UT}K. Ueno, K. Takasaki, Toda lattice hierarchy,
In \emph{``Group representations and systems of differential
equations'' (Tokyo, 1982)}, 1-95, Adv. Stud. Pure Math., 4,
North-Holland, Amsterdam, 1984.


\bibitem{witten}
E. Witten, Two-dimensional gravity and intersection theory on moduli
space, Surveys in differential geometry, 1(1991), 243-310.

\bibitem{dubrovin} B. A.
Dubrovin, Geometry of 2D topological field theories, in
\emph{Integrable systems and quantum groups} (Montecatini Terme,
1993), 120-348, Lecture Notes in Math., 1620, Springer, Berlin,
1996.






\bibitem{CDZ}
G. Carlet, B. Dubrovin, Y. Zhang,  The Extended Toda Hierarchy,
Moscow Mathematical Journal  4 (2004), 313-332.

\bibitem{M}
T. Milanov, Hirota quadratic equations for the extended Toda
hierarchy, Duke Math. J. 138 (2007), 161-178.


\bibitem{C}
G. Carlet, The extended bigraded Toda hierarchy, J. Phys. A, 39
(2006), 9411-9435.


\bibitem{ourJMP}
 C. Z. Li, J. S. He, K. Wu, Y. Cheng,  Tau function and  Hirota bilinear equations for the extended  bigraded Toda
 Hierarchy, J. Math. Phys.51(2010),043514.
\bibitem{KodamaCMP}
S. Aoyama, Y. Kodama, Topological Landau-Ginzburg theory with a
rational potential and the dispersionless KP hierarchy, Commun.
Math. Phy. 182(1996),185-219.
\bibitem{TH}
T. Milanov, H. H. Tseng, The spaces of Laurent polynomials,
$\mathbb{P}^1$-orbifolds, and integrable hierarchies,
 Journal f\"ur die reine und angewandte
Mathematik 622 (2008), 189-235.
\bibitem{solutionBTH} C. Z. Li, Solutions of  bigraded Toda hierarchy,
Journal of Physics A 44(2011), 255201.

 \bibitem{dispBTH}
 C. Z. Li, J. S. He, Dispersionless bigraded Toda hierarchy and its additional symmetry, Reviews in Mathematical Physics, 24(2012), 1230003.

\bibitem{ourBlock}
 C. Z. Li, J. S. He, Y. C. Su, Block type symmetry of bigraded Toda hierarchy,
J. Math. Phys. 53(2012), 013517.



\bibitem{kac} V. G. Kac and J. W. van de Leur, The n-component KP hierarchy and representation theory, J. Math. Phys. \textbf{44} (2003), 3245.

\bibitem{avanM}M. Adler, P. van Moerbeke and P. Vanhaecke, Moment matrices and multi-component KP, with applications to random matrix theory, Commun. Math. Phys. 286(2009), 1.


 \bibitem{manasInverse2}   M. Ma\~{n}as, L. Mart\'{\i}nez Alonso, The multicomponent 2D Toda hierarchy: dispersionless limit, Inverse Problems, 25(2009), 11.
\bibitem{manasinverse} M. Ma\~{n}as, L. Mart\'{\i}nez Alonso, and C. \'{A}lvarez Fern\'{a}ndez, The multicomponent 2D Toda hierarchy: discrete
flows and string equations, Inverse Problems, 25(2009), 065007.

\bibitem{manasaInverse}C. \'{A}lvarez Fern\'{a}ndez, U. Fidalgo Prieto,  M. Ma\~{n}as, The multicomponent 2D Toda hierarchy: generalized matrix orthogonal polynomials, multiple orthogonal polynomials and Riemann--Hilbert problems, Inverse Problems, 26(2010), 055009.

 \bibitem{manas} C. \'{A}lvarez Fern\'{a}ndez, U. Fidalgo Prieto,  and M. Ma\~{n}as 2010 Multiple orthogonal polynomials of mixed type:
Gauss-Borel factorization and the multi-component 2D Toda hierarchy, Advances in Mathematics, 227(2011), 1451-1525.




\bibitem{leurhirota}G. Carlet, J. van de Leur, Hirota equations for the extended bigraded Toda hierarchy and the total
  descendent potential of $\mathbb{P}^1$ orbifolds, J. Phys. A: Math. Theor. 46(2013), 405205.


\bibitem{EMTH} C. Z. Li, J. S. He, On the extended multi-component Toda hierarchy, Math. Phys. Analyis and Geometry 17(2014), 377-407.

\bibitem{zuo} D. F. Zuo, Local matrix generalizations of $W$-algebras, arXiv:1401.2216.

\bibitem{Hedeterminant}J. S. He,  L. Zhang,  Y. Cheng and  Y. S. Li,  Determinant representation of Darboux transformation for the
AKNS system, Sci. China A, 12(2006), 1867-78.
\bibitem{rogueHMB} C. Z. Li, J. S. He, and  K. Porsezian, Rogue waves of the Hirota and the Maxwell-Bloch equation, Physical Review E 87(2013), 012913.

\bibitem{thesis}G. Carlet, Extended Toda hierarchy and its Hamiltonian structure, SISSA (2013), thesis.










\end{thebibliography}
\end{document}